\newtheorem{definition}{\noindent \noindent {\bf
Definition}}[section]
\newtheorem{lem}{{\bf Lemma}}[section]
\newtheorem{theorem}{Theorem}[section]
\newtheorem{prop}{{\bf Proposition}}[section]
\title{Coisotropic embeddings of precosymplectic manifolds}
\author{\sffamily
Manuel de León$^{1,2}$, \thanks{mdeleon@icmat.es\qquad\qquad ORCID: 0000-0002-8028-2348}\, 
Pablo Soto Martín$^{1, 3}$, \thanks{pablo.soto@estudiante.uam.es\qquad\qquad ORCID: 0009-0001-6958-3182}\,
\\[1ex]
\\[0.1ex]
\normalsize\itshape\sffamily
$^1$Instituto de Ciencias Matemáticas (CSIC), Madrid, Spain
\\[0.1ex]
\normalsize\itshape\sffamily
$^2$Real Academia de Ciencias, Madrid, Spain
\\[0.1ex]
\normalsize\itshape\sffamily
$^3$Universidad Autónoma de Madrid, Spain
\\[0.1ex]}
\begin{document}
\maketitle

\begin{abstract}
In this paper we provide a complete characterisation of coisotropic embeddings of precosymplectic manifolds into cosymplectic manifolds. This result extends a theorem of Gotay about coisotropic embeddings of presymplectic manifolds. We also extend to the cosymplectic case some results of A. Weinstein which generalise the Darboux theorem. While symplectic geometry is the natural framework for developing Hamiltonian mechanics, cosymplectic geometry is the corresponding framework for time-dependent Hamiltonian mechanics. The motivation behind proving this theorem is to generalise known results for symplectic geometry to cosymplectic geometry, so that they can be used to study time-dependent systems, for instance for the regularization problem of singular Lagrangian systems. 
\end{abstract}

\section{Introduction}

Symplectic geometry is the natural framework for developing Hamiltonian mechanics. Indeed, the phase space of a classical Hamiltonian system is the cotangent bundle $T^*Q$ of the configuration manifold $Q$, which is endowed with a canonical symplectic structure $\omega_Q$. The solutions of the Hamilton equations are then the integral curves of the Hamiltonian vector field $X_H$ obtained from the Hamiltonian energy $H$ using the symplectic form $\omega_Q$. In the case of Lagrangian mechanics, the velocity space is the tangent bundle $TQ$ of the configuration manifold. In this case there is no canonical symplectic structure, but it is given using the Lagrangian function $L$ and the so-called almost-tangent geometry of the tangent bundle; the 2-form is usually denoted by $\omega_L$. In the latter case, if the Lagrangian is regular (which is usual for Lagrangians appearing in mechanics) the corresponding Hamiltonian vector field is a Second Order Differential Equation (SODE) whose solutions are just the ones of the Euler-Lagrange equations determined by the Lagrangian (see \cite{Abraham1967Foundations,deLeon2011methods}).

This geometrical description of the mechanics is valid when the Hamiltonian or the Lagrangian does not explicitly depend on time; if it does, we must consider the spaces $\mathbb{R} \times T^*Q$ and $\mathbb{R} \times TQ$, respectively. Now, the geometric structures we must use to obtain the dynamics are the so-called cosymplectic structures \cite{albert,cantrijn, deLeon2011methods,cosimpl}. A cosymplectic structure on a manifold $M$ is a pair $(\Omega, \eta)$ where $\Omega$ is a closed 2-form and $\eta$ a closed 1-form, such that $\eta \wedge \Omega^n \not= 0$, where the dimension of $M$ is precisely $2n+1$.

Just as symplectic geometry has allowed us to obtain new results in mechanics, such as symplectic reduction, the various generalizations of Noether's theorem, study of stability, development of geometric integrators, among many others; something similar occurs in the case of cosymplectic geometry, although the extensions are in many cases not as direct as they might seem at first glance.

In this study we have focused on the extension of a result of great interest in symplectic geometry, the so-called coisotropic embedding theorem. If $N$ is a submanifold of a symplectic manifold $(M, \omega)$, we say that it is coisotropoic if $TN^\perp \subset TN$, where $TN^\perp$ is the symplectic complement of $TN$ in the tangent bundle $TM$. On the other hand, a presymplectic manifold is a pair $(P, \omega)$ where $\omega$ is closed but not of maximal rank (if the rank is maximal, then we are in presence of a symplectic manifold). 

Mark J. Gotay \cite{gotay1982coisotropic} has proved a relevant result that says that any presymplectic manifold can be coisotropically embedded in a symplectic manifold (see also \cite{marle83sous,zambon2014equivalences}). The existence is unique up to symplectomorphisms (this result lies on previous ones by Alan Weinstein \cite{weinstein1977lectures, weinstein71symplectic}).
One of the interests of this result is the application to the so-called problem of regularization. Indeed, if a Lagrangian function is not regular (this happens in several physical theories), the corresponding 2-form $\omega_L$ is presymplectic (assuming some regularity condition on the rank). For this kind of Lagrangian functions, P.A.M. Dirac developed a constraint algorithm (the so-called Dirac-Bergmann algorithm \cite{Dirac_1950}) that has been geometrized by Mark J. Gotay, James N. Nester and George Hinds \cite{gotay1,GotaySingularLagrangians1,GotaySingularLagrangians2}. The algorithm produces a sequence of constraint submanifolds just to find a final constraint submanifold. At this stage one can use the coisotropic embedding theorem to embed our system in a larger one that it is symplectic and the corresponding dynamics are conveniently related (see Alberto Ibort and Jes\'us Mar{\'i}n-Solano \cite{ibort-marin}).

There exists an extension of the Dirac-Bergmann constraint algorithm for singular time-dependent Lagrangian systems, developed by Domingo Chinea, Manuel de Le\'on and Juan Carlos Marrero \cite{chinea}. To be able to achieve a regularization for singular Lagrangians depending on time, we first need to extend the above theorem to the cosymplectic scenario. The objective of this paper is to extend the coisotropic embedding theorem to precosymplectic manifolds. The application of the above results to the case of singular time-dependent Lagrangian systems is the subject of a forthcoming paper in elaboration.

This paper is structured as follows. In Section 2 we develop all the mathematical background on cosymplectic geometry. Section 3 is devoted to the extension of the coisotropic embedding theorem to precosymplectic manifolds. Finally, we mention some conclusions and future work. The coisotropic embedding theorem for presymplectic manifolds is found on Appendix A, in order to to make the paper as self-contained as possible. Furthermore, it brings together in a complete way the details of the demonstrations that are collected in several papers.

\section{Cosymplectic geometry}

Cosymplectic manifolds are the odd-dimension counterpart of symplectic manifolds. The original definition of cosymplectic manifolds comes from Libermann \cite{libermann1959sur} in 1959:

\begin{definition}
    Let $M$ be a manifold of odd dimension $2n + 1$. A \textbf{cosymplectic} manifold is a triple $(M, \Omega, \eta)$, where $\eta$ is a closed 1-form and $\Omega$ is a closed 2-form such that $\eta \wedge \Omega^n$ is a volume form.
\end{definition}

If the 2-form $\Omega$ does not have maximal rank, but has constant rank $2r \leq 2n$ and $\eta \wedge \Omega^r \neq 0$, we say $(M, \Omega, \eta)$ is \textbf{precosymplectic}.

\

We can generalise this definition to vector bundles; indeed, if $\pi :E \longrightarrow M$ is a vector bundle such that there are a 2-form and a 1-form $\Omega$ and $\eta$, respectively, such that any fiber $E_x$ is a cosymplectic vector space with forms $\Omega(x)$ and $\eta(x)$. If both $x \rightarrow \Omega(x)$ and $x \rightarrow \eta(x)$ are smooth, then the vector bundle is said to be cosymplectic.

\

Every cosymplectic structure $(\Omega, \eta)$ on $M$ induces an isomorphism \cite{CAPPELLETTI_MONTANO_2013} $\flat$ from the vector fields in $M$ to the 1-forms in $M$, defined by

\begin{equation}
    \label{cosympl:bemol}
    \flat_{(\Omega, \eta)}(X) = i_X\Omega + \eta(X)\eta,
\end{equation}

\noindent for every vector field $X \in \mathfrak{X}(M)$. We will call $\xi = \flat_{(\Omega, \eta)}^{-1}(\eta)$ the Reeb vector field of the cosymplectic manifold $(M, \Omega, \eta)$. It is also characterised by the equations

\begin{equation}
    \label{cosympl:reeb}
    i_\xi\Omega = 0, \;\; \eta(\xi) = 1.
\end{equation}

This vector field is very useful as it will give us the dynamics of a time-dependent Lagrangian or Hamiltonian system. As with the symplectic case, we also have structure-preserving morphisms for cosymplectic manifolds.

\begin{definition}
    Let $(M, \Omega, \eta), (N, \Omega', \eta')$ be cosymplectic manifolds and $h: M \rightarrow N$ a differentiable mapping. We say that $h$ is cosymplectic if $\Omega = h^*\Omega'$ and $\eta = h^*\eta'$. We will call $h$ a cosymplectomorphism when $h$ is a global diffeomorphism.
\end{definition}

Given a cosymplecticmorphism $h: M \rightarrow N$, the Reeb vector field $\xi$ of $M$ is $h$-projectable and, its projection is the Reeb vector field $\xi'$ of $N$, that is, $$Th \circ \xi = \xi' \circ h.$$

We now define the orthogonal complement:

\begin{definition}
    \label{def:cosymp:orthogonal}
     Let $(M, \Omega, \eta)$ be a cosymplectic manifold. Let $p\in M$ be a point and $K_p$ be a subspace of $T_pM$. Then, we define the cosymplectic orthogonal complement of $K_p$ in $T_pM$ as
     \begin{equation}
        (K_p)^\perp = \{ X\in T_pM | \; \eta(X) = 0, \Omega(X,Y) = 0, \;\forall Y \in K_p\}
    \end{equation}
\end{definition}

We will say that $K \subset M$ is a coisotropic submanifold if $\xi \in K$ and $(TK)^\perp \subset TK$. Notice that $K^\perp \subset \ker \eta$.

%¿Meter prop 2.3 de \cite{CAPPELLETTI_MONTANO_2013}?

There is also a corresponding Darboux Theorem for precosympletic manifolds (see \cite{godbillon}). Any precosymplectic manifold $(M, \Omega, \eta)$ of dimension $2p + k + 1$, with rank $\Omega = 2p$,  admits local coordinates $(x^i, t, z^\mu), \; 1\leq i\leq 2p, 1 \leq \mu \leq k$ such that $$\Omega = \sum_{i=1}^p dx^i \wedge dx^{p + i}, \quad \eta = dt, \quad \xi = \frac{\partial}{\partial t}.$$

\subsection{Coisotropic embeddings}

We are interested in a specific type of embedding: coisotropic embeddings. These embeddings can be defined for precosymplectic manifolds.

\begin{definition}
    \label{presymplectic:coisotropic}
    \cite{gotay1982coisotropic} Let $(M, \Omega,\eta)$ be a precosymplectic manifold and let $(P, \Omega', \eta')$ be a cosymplectic manifold. Then, a coisotropic embedding of $(M, \Omega, \eta)$ into $(P, \Omega', \eta')$ is a closed embedding $j: M \rightarrow P$ such that

    \begin{itemize}
        \item $j^*\Omega' = \Omega$,
        \item $j^*\eta' = \eta$,
        \item $TM^\perp \subset Tj(TM).$
    \end{itemize}
\end{definition}

Following the definition of coisotropic embedding, we introduce the notion of {\it neighbourhood equivalence}, which will help us to classify coisotropic embeddings. Given $(M, \Omega, \eta)$ a precosymplectic manifold and embeddings $j_1 : M \rightarrow (P_1, \Omega_1, \eta_1), j_2: M \rightarrow (P_2, \Omega_2, \eta_2)$, where $(P_1, \Omega_1, \eta_1)$ and $(P_2, \Omega_2, \eta_2)$ are cosymplectic manifolds, we say that they are neighbourhood equivalent if there exist

\begin{enumerate}
    \item open neighbourhoods $U_1 \subset P_1$ of $j_1(M)$ and $U_2 \subset P_2$ of $j_2(M)$,
    \item a cosymplectomorphism $\psi: (U_1, \Omega_1, \eta_1) \rightarrow (U_2, \Omega_2,\eta_2)$ such that $\psi \circ j_1 = j_2$.
\end{enumerate}

\section{Coisotropic embeddings of precosymplecitc manifolds}

In this section we extend the coisotropic embedding theorem to precosymplectic manifolds. The statement of the theorem we want to prove is the following:

\begin{theorem}
    \label{thm:cois:precosymp}
    Every coisotropic submanifold of a cosymplectic manifold has an induced precosymplectic structure. Additionally, every precosymplectic manifold can be coisotropically embedded into a cosymplectic manifold. This embedding is unique up to a local cosymplectomorphism.
\end{theorem}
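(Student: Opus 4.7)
\emph{Induced structure.} Let $i\colon K \hookrightarrow (P,\Omega',\eta')$ be coisotropic and set $\Omega := i^*\Omega'$, $\eta := i^*\eta'$. Both are closed, and since $\xi \in TK$ we have $\iota_\xi\Omega = 0$ and $\eta(\xi) = 1$. Using that $\flat_{(\Omega',\eta')}$ is a fibrewise isomorphism, I would show the orthogonal $TK^\perp$ of Definition \ref{def:cosymp:orthogonal} coincides with $\flat_{(\Omega',\eta')}^{-1}(\mathrm{Ann}(TK))$ precisely because $\xi \in TK$, and therefore has constant codimension $\dim K$ in $TP|_K$. A short linear-algebra argument then identifies $\ker(\Omega|_K) = TK^\perp \oplus \langle\xi\rangle$, which forces $\Omega$ to have locally constant rank $2r$. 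Finally, evaluating $\eta \wedge \Omega^r$ on $\xi$ together with any symplectic basis of $TK/\ker(\Omega|_K)$ chosen inside $\ker\eta$ shows that this form is nowhere zero, so $(K,\Omega,\eta)$ is precosymplectic.

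\emph{Existence.} I would imitate Gotay's presymplectic construction. Write $\dim M = 2p + k + 1$ with $\mathrm{rank}\,\Omega = 2p$ and form the rank-$k$ characteristic subbundle $E := \ker\Omega \cap \ker\eta$. Pick a vector field $\xi$ on $M$ with $\iota_\xi\Omega = 0$ and $\eta(\xi) = 1$ (which exists because $\eta$ is nonzero on $\ker\Omega$), and choose a complement $W$ of $E$ in $TM$ containing $\xi$; this gives a projection $\mathrm{pr}_E\colon TM \to E$. On the total space of $\pi\colon E^* \to M$ introduce the tautological 1-form
\begin{equation*}
    \alpha_{(q,\theta)}(v) := \theta\bigl(\mathrm{pr}_E(T\pi\,v)\bigr),
\end{equation*}
and set $\widetilde\Omega := \pi^*\Omega + d\alpha$, $\widetilde\eta := \pi^*\eta$, both closed by construction. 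At the zero section $\pi^*\Omega$ has rank $2p$ on $W$, while $d\alpha$ restricts to the duality pairing between $E$ and $E^*$, so $\widetilde\Omega$ has rank $2(p+k)$; combined with $\widetilde\eta(\xi) = 1$ this shows $\widetilde\eta \wedge \widetilde\Omega^{p+k}$ is a volume form on a neighbourhood $U$ of $M$. A direct computation with the block structure of $\widetilde\Omega$ on $T_qE^*|_M = E \oplus W \oplus E^*$ gives $TM^\perp = E \subset TM$, so the inclusion $M \hookrightarrow U$ is coisotropic.

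\emph{Uniqueness and main obstacle.} For two coisotropic embeddings $j_1,j_2$ of $(M,\Omega,\eta)$, I would first use a cosymplectic version of the Weinstein tubular neighbourhood theorem (whose symplectic counterpart is collected in Appendix A) to produce a diffeomorphism $\Phi$ between tubular neighbourhoods with $\Phi\circ j_1 = j_2$ that identifies the normal bundles, both of which are isomorphic to $E^*$ by the previous construction. Pulling back reduces the problem to comparing two cosymplectic pairs $(\Omega_0,\eta_0)$ and $(\Omega_1,\eta_1)$ on a single neighbourhood $U$ of $M$ that agree on $TU|_M$. I would then run Moser's trick on the convex combinations $\Omega_t := (1-t)\Omega_0 + t\Omega_1$ and $\eta_t := (1-t)\eta_0 + t\eta_1$, which remain cosymplectic on a possibly smaller neighbourhood for all $t \in [0,1]$. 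A relative Poincar\'e lemma produces a 1-form $\gamma$ and a function $f$, both vanishing on $TU|_M$, with $d\gamma = \Omega_1 - \Omega_0$ and $df = \eta_1 - \eta_0$; the unique time-dependent vector field
\begin{equation*}
    X_t := \flat_{(\Omega_t,\eta_t)}^{-1}(-\gamma - f\,\eta_t)
\end{equation*}
then simultaneously satisfies $\iota_{X_t}\Omega_t = -\gamma$ and $\eta_t(X_t) = -f$, and its flow $\psi_t$ gives the required local cosymplectomorphism after composition with $\Phi$. The main obstacle is exactly this simultaneous Moser step: unlike the purely symplectic situation, here one must transport the $2$-form and the $1$-form together while preserving the Reeb normalisation, and the crucial observation is that the cosymplectic isomorphism $\flat_{(\Omega_t,\eta_t)}$ packages both conditions on $X_t$ into a single invertible linear equation, which is uniquely solvable as long as the interpolated pair remains cosymplectic on a neighbourhood of $M$.
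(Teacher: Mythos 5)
Your treatment of the induced structure and of the existence step is essentially the paper's: the same characteristic bundle $K=\ker\Omega\cap\ker\eta$, the same Liouville-type primitive built from a projection onto $K$ along a complement containing the Reeb field, and the same rank count at the zero section. The problem is in the uniqueness step.

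The gap is the ``simultaneous Moser step''. You claim that $X_t:=\flat_{(\Omega_t,\eta_t)}^{-1}(-\gamma-f\,\eta_t)$ satisfies both $\iota_{X_t}\Omega_t=-\gamma$ and $\eta_t(X_t)=-f$; it does not in general. Evaluating $\flat_{(\Omega_t,\eta_t)}(X_t)=\iota_{X_t}\Omega_t+\eta_t(X_t)\,\eta_t$ on the Reeb field $\xi_t$ of $(\Omega_t,\eta_t)$ forces $\eta_t(X_t)=-\gamma(\xi_t)-f$ and hence $\iota_{X_t}\Omega_t=-\gamma+\gamma(\xi_t)\,\eta_t$. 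The Moser identities then leave the error terms $d\bigl(\gamma(\xi_t)\bigr)\wedge\eta_t$ in $\tfrac{d}{dt}\psi_t^*\Omega_t$ and $-d\bigl(\gamma(\xi_t)\bigr)$ in $\tfrac{d}{dt}\psi_t^*\eta_t$, so the flow preserves neither form unless $\gamma(\xi_t)\equiv 0$ near $M$. The relative Poincar\'e lemma gives no control over $\gamma(\xi_t)$ off $M$, and here $\xi_t$ depends on $t$ because both $\Omega_t$ and $\eta_t$ move: you would need one primitive $\gamma$ with $d\gamma=\Omega_1-\Omega_0$ annihilating the entire family of Reeb directions. The device of building the primitive with a retraction that is the identity in the Reeb direction kills $\gamma(\xi)$ only for a single fixed $\xi$, and it requires $\iota_\xi(\Omega_1-\Omega_0)=0$, which holds for a common Reeb field of the two endpoints but not for the interpolated $\xi_t$. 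This is precisely why the paper splits the deformation: Lemma \ref{lemma:cosympl:ext:1} first deforms the $1$-forms alone, choosing the generator $Z_t=-\varphi N_t$ so that the resulting diffeomorphism carries $\xi_0$ to $\xi_1$; only then, with $\eta$ and the Reeb field frozen, does Lemma \ref{lemma:cosympl:ext:2} run Moser on the $2$-forms, where the modified primitive can legitimately be arranged to satisfy $\phi(\xi)=0$ and hence $\eta(Y_t)=0$. A smaller omission: asserting that the two normal bundles are ``both isomorphic to $E^*$'' does not suffice to invoke Theorem \ref{thm:cosympl:ext}; one must exhibit an isomorphism of cosymplectic vector bundles over $M$, which is what the $G\oplus G^\perp$ splitting together with Proposition \ref{prop:Lagrangian:symplectomorphism} supplies.
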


Theorem \ref{thm:cois:precosymp} can be divided into three smaller results, one corresponding to each statement of the theorem. We will denote these results as direct, existence and local uniqueness theorems respectively. Even though the ideas to proof the theorem for the precosymplectic case are similar to the presymplectic one, they are not as direct as they might seem at first glance.

\subsection{Direct theorem}

\begin{theorem}
    Let $(P, \Omega, \eta)$ be a cosymplectic manifold, and let $M$ be a coisotropic submanifold. Let $j: M \xhookrightarrow{} P$ be the inclusion of $M$ in $P$. Then $(M, j^*\Omega, j^*\eta)$ is a precosymplectic manifold, with $j^*\Omega$ having constant rank 2dim $M$ - dim $P$ - 1.
\end{theorem}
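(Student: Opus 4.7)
My plan is to exploit the pointwise linear algebra of the cosymplectic structure on $P$ and transfer it to the pullback forms $\omega := j^*\Omega$ and $\alpha := j^*\eta$ on $M$. Closedness of $\omega$ and $\alpha$ is immediate because pullback commutes with $d$, so the nontrivial tasks are only to prove that $\omega$ has constant rank $2r := 2\dim M - \dim P - 1$ and that $\alpha \wedge \omega^r$ does not vanish on $M$.

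The heart of the argument is the pointwise identification of $\ker\omega_p$. By definition, this kernel equals $T_pM \cap (T_pM)^{\perp_\Omega}$, where $\perp_\Omega$ denotes the $\Omega$-orthogonal inside $T_pP$, obtained by dropping the $\eta$-condition from Definition \ref{def:cosymp:orthogonal}. Because $\ker\Omega = \langle\xi\rangle$ pointwise on $P$ while $\eta(\xi) = 1$, one has the general splitting $(T_pM)^{\perp_\Omega} = (T_pM)^\perp \oplus \langle\xi_p\rangle$, irrespective of whether $M$ is coisotropic. The coisotropic hypothesis then supplies both $\xi_p \in T_pM$ and $(T_pM)^\perp \subset T_pM$, so $(T_pM)^{\perp_\Omega} \subset T_pM$ and therefore $\ker\omega_p = (T_pM)^\perp \oplus \langle\xi_p\rangle$. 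A short count, valid precisely because $\xi_p \in T_pM$, gives $\dim(T_pM)^\perp = \dim P - \dim M$, whence $\dim\ker\omega_p = \dim P - \dim M + 1$ is independent of $p$, and subtraction from $\dim M$ produces the claimed constant rank $2\dim M - \dim P - 1$.

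The remaining condition $\alpha \wedge \omega^r \neq 0$ I would check by a direct evaluation. At each $p$, choose a complement $V_p \subset T_pM$ to $\ker\omega_p$ with basis $\{f_1,\dots,f_{2r}\}$ on which $\omega$ restricts to a symplectic form, so that $\omega^r(f_1,\dots,f_{2r}) \neq 0$. Evaluating $\alpha\wedge\omega^r$ on $(\xi_p, f_1, \dots, f_{2r})$, every term in the antisymmetric expansion in which $\xi_p$ appears as an argument of $\omega^r$ vanishes, since $i_\xi\Omega = 0$ kills any tuple containing $\xi_p$. The only surviving term is $\alpha(\xi_p)\,\omega^r(f_1,\dots,f_{2r}) = \eta(\xi)\,\omega^r(f_1,\dots,f_{2r})$, which is nonzero because $\eta(\xi)=1$.

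No step is a serious obstacle; the only care required is to keep the pure $\Omega$-orthogonal distinct from the cosymplectic orthogonal of Definition \ref{def:cosymp:orthogonal}, and to invoke both halves of the coisotropic hypothesis, namely $\xi\in TM$ in addition to $(TM)^\perp \subset TM$. Once the splitting $(T_pM)^{\perp_\Omega}=(T_pM)^\perp\oplus\langle\xi_p\rangle$ is recorded, the characterising identities $i_\xi\Omega = 0$ and $\eta(\xi)=1$ do the rest, with neither coordinates nor the Darboux theorem needed for this direct part of the main theorem.
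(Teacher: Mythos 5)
Your proof is correct and follows essentially the same route as the paper's: identify $\ker(j^*\Omega)_p$ with $T_pM\cap(T_pM)^{\perp_\Omega}$, split the $\Omega$-orthogonal as the cosymplectic orthogonal plus $\ker\Omega=\langle\xi\rangle$, use both halves of the coisotropic hypothesis to place it inside $T_pM$, and count dimensions. Your explicit evaluation of $\eta\wedge\omega^r$ on $(\xi_p,f_1,\dots,f_{2r})$ is a welcome elaboration of a step the paper only asserts.
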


\begin{proof}
    Let $\Omega_M = j^*\Omega$ and $\eta_M = j^*\eta$. Note than both $\Omega_M$ and $\eta_M$ are closed as $d(j^*\Omega) = j^*d\Omega = 0$ and $d(j^*\eta) = j^*d\eta = 0$.

    Let $x$ be a point of $M$. Then, rank $\Omega_M(x)$ = dim (Im $\flat_{\Omega_M(x)}$) = dim $T_xM$ - dim (ker $\flat_{\Omega_M(x)}$), where $\flat_{\Omega_M(x)} = i_X \Omega_M(x)$. To avoid a notation overlap, for this proof we denote as $(T_xM)^\perp_{symp}$ the symplectic complement with respect to $P$ (that means, on the presymplectic vector bundle $(TP, \Omega)$), and we denote as $(T_xM)^\perp_{cos}$ the cosymplectic complement with respect to $P$ given in definition \ref{def:cosymp:orthogonal}.

    Then, considering the linear mapping $\flat_{\Omega_M(x)}: T_xM \rightarrow (T_xM)^*$, we have 
    $$
    {\text{ker } \flat_{\Omega_M(x)} = T_xM\; \cap \;(T_xM)^\perp_{symp}}
    $$
    and
    $$(
    T_xM)^\perp_{symp} = (T_xM)^\perp_{cos} \oplus \text{ker } \Omega (x).
    $$
    Since $M$ is coisotropic, we know that $\text{ker } \Omega (x)\subset T_xM$ and that $(T_xM)^\perp_{cos} \subset T_xM$. Therefore, $\text{dim (ker } \flat_{\Omega_M(x)}) = \text{dim }(T_xM)^\perp_{symp}$. By the properties of the symplectic complement (\cite{deLeon2011methods}), we have $$\text{dim } (T_xM)^\perp_{symp} = \text{dim } T_xP + \text{ dim (ker } \Omega (x)\; \cap \; T_xM) - \text{ dim } T_xM = \text{dim } P - \text{ dim } M + 1.$$ Then,
    \begin{align*}
        \text{rank }\Omega_M(x) &= \text{dim (Im }\flat_{\Omega_M (x)}) = \text{dim }T_xM - \text{dim (ker }\flat_{\Omega_M(x)}) = 2 \text{dim }T_xM - \text{dim }T_xP \\
        &= 2 \text{dim }M - \text{dim }P - 1. &&
    \end{align*} 
    Let $\xi_x$ be the Reeb vector field in $T_xP$. Since $\xi_x \in T_xM$, we have $i_\xi \Omega_M(x) = 0$ and $i_\xi \eta_M = 1$. This implies that $\xi_x$ is a Reeb vector field of $(M, \Omega_M, \eta_M)$ and that $\Omega_M(x)^r \wedge \eta_M \neq 0$, where $r = \text{dim }M - \frac{\text{dim }P + 1}{2}$.

    %Let $P^\perp$ be the symplectic complement as defined in Definition \ref{def:symplectic:orthogonal}. Then, dim (ker $S_{j^*\omega}$)
    %We have that dim $P$ + dim ($P^\perp \cap M)$ = dim $M$ + dim $M^\perp$ (Pág 231 del libro). As $M$ is a coisotropic submanifold of $P$, we have that $P^\perp \subset M^\perp \subset M$. As $P$ is cosymplectic, we have that dim $P^\perp$ = 1. So, dim $M^\perp$ = dim $P$ - dim $M$ + 1. If we substitute, we get that rank $j^*\omega$ = 2dim $M$ - dim $P$ - 1.
    
\end{proof}

\subsection{Existence theorem}

The bundle in which we will embed $M$ for the existence theorem is similar to the one used for the presymplectic case. Instead of using the characteristic bundle $K=\text{ ker } \Omega$, we will use $K = \text{ ker } \Omega \cap \text{ ker } \eta$.

We define $K^*$ as the vector bundle whose fibres are $$(K^*)_x = (K_x)^*,$$ the dual of the fibres of $K$.

\begin{theorem}
\label{thm:cosympl_cois:existence}
    Let $(M, \Omega, \eta)$ be a precosymplectic manifold of dimension $2p + k + 1$, where $\Omega$ has constant rank $2p$. There exists a cosymplectic structure on a neighborhood of the zero-section of $K^*$, where $K = $ ker $\Omega \; \cap$ ker $\eta$. Moreover, $(M, \Omega, \eta)$ may be coisotropically embedded in this neighborhood as the zero-section.

\end{theorem}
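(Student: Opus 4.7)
The plan is to mimic Gotay's construction for the presymplectic case, modified to accommodate the Reeb direction of $(M,\Omega,\eta)$. Since $\Omega$ has constant rank $2p$, the characteristic bundle $\ker\Omega$ has rank $k+1$ and, because $\eta(\xi)=1$, decomposes as $\ker\Omega=\langle\xi\rangle\oplus K$; in particular $K$ sits as a rank-$k$ subbundle of $\ker\eta$. Using a Riemannian metric I choose a subbundle $W'\subset\ker\eta$ complementary to $K$, and set $W:=W'\oplus\langle\xi\rangle$, so that $TM=K\oplus W$ with $\xi\in W$. This splitting induces a fibrewise linear embedding $\iota\colon K^*\hookrightarrow T^*M$ over $M$ by extending $\alpha\in K^*_x$ to the covector on $T_xM$ that vanishes on $W_x$ and restricts to $\alpha$ on $K_x$.

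With $\pi\colon K^*\to M$ the projection and $\theta_M$ the Liouville 1-form on $T^*M$, I put $\theta_0:=\iota^*\theta_M$ and define
\begin{equation*}
    \tilde\Omega:=\pi^*\Omega+d\theta_0,\qquad \tilde\eta:=\pi^*\eta
\end{equation*}
on $K^*$. Both forms are manifestly closed. Along the zero section $j\colon M\hookrightarrow K^*$ one has $\theta_0=0$, so $j^*\tilde\Omega=\Omega$ and $j^*\tilde\eta=\eta$ follow at once.

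The main point is to verify the cosymplectic nondegeneracy $\tilde\eta\wedge\tilde\Omega^{p+k}\neq 0$ on a neighborhood of $j(M)$; by openness it suffices to check it along the zero section. Using the precosymplectic Darboux theorem recalled above I pick coordinates $(x^i,t,z^\mu)$ on $M$ with $\Omega=\sum dx^i\wedge dx^{p+i}$, $\eta=dt$, $\xi=\partial_t$, in which $K=\mathrm{span}(\partial_{z^\mu})$ and my splitting matches the coordinate one. Taking dual fibre coordinates $p_\mu$ on $K^*$, the embedding $\iota$ yields $\theta_0=p_\mu\,dz^\mu$, producing the local normal form
\begin{equation*}
    \tilde\Omega=\sum_{i=1}^{p} dx^i\wedge dx^{p+i}+\sum_{\mu=1}^{k} dp_\mu\wedge dz^\mu,\qquad \tilde\eta=dt,
\end{equation*}
which is precisely the cosymplectic Darboux normal form in dimension $2(p+k)+1$. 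Hence $\tilde\eta\wedge\tilde\Omega^{p+k}$ is a volume form in a neighborhood of $j(M)$, and $\xi$ is the Reeb vector field of $(\tilde\Omega,\tilde\eta)$ along the zero section.

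The coisotropic property of $j(M)$ is then a direct unpacking of Definition~\ref{def:cosymp:orthogonal}: writing a cosymplectic orthogonal vector as $X+\alpha\in T_xM\oplus K^*_x$, the equations $\tilde\eta(X+\alpha)=0$ and $\tilde\Omega(X+\alpha,Y)=0$ for all $Y\in T_xM$ reduce to $\eta(X)=0$ and $\Omega(X,Y)+\alpha(\mathrm{pr}_K(Y))=0$; testing with $Y\in K$ forces $\alpha=0$, and then $X\in\ker\Omega\cap\ker\eta=K\subset T_xM$. The main subtlety compared with Gotay's presymplectic argument is the bookkeeping of the Reeb direction: $\xi$ must be placed inside the complement $W$ (rather than inside the characteristic bundle, as one would naively do by using all of $\ker\Omega$) so that it survives as the Reeb vector field of the ambient cosymplectic structure and $\tilde\eta$ pairs nontrivially with $\tilde\Omega^{p+k}$.
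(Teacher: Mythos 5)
Your construction is essentially the paper's: you split $TM=K\oplus W$ with the Reeb direction placed in the complement $W$, embed $K^*$ into $T^*M$ accordingly, and take $\tilde\Omega=\pi^*\Omega+d\theta_0$, $\tilde\eta=\pi^*\eta$ --- your $\theta_0=\iota^*\theta_M$ is exactly the generalised Liouville form $\lambda_{p_G}$ of Marle used in the paper's proof, and your coisotropy check is the same. The one imprecision is the claim that your metric-chosen splitting coincides with the Darboux coordinate splitting (so that $\theta_0=p_\mu\,dz^\mu$); in general $\theta_0=p_\mu(dz^\mu+A^\mu_i\,dx^i)$ with correction terms $A^\mu_i$ as the paper records, but these do not affect the nondegeneracy of $\tilde\eta\wedge\tilde\Omega^{p+k}$ along the zero section nor the orthogonality computation there, so the argument stands.
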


\begin{proof}
    As $\Omega$ is precosymplectic, it is possible to find local Darboux coordinates in $M$, $(x^i, t, z^\mu), \; i=1,...,2p, \; \mu = 1,...,k$ such that $\Omega = dx^i \wedge dx^{p+i}$ and $\eta = dt$. The coordinates $z^\mu$ are local coordinates along the fibres of $K$, and the vector fields $\partial/\partial z^\mu$ form a local basis of $\Gamma(K)$. Let $\{(U_\alpha, \varphi_\alpha)\}$ be an atlas of $M$ corresponding to the local Darboux coordinates. Then, there exists a partition of unity $\{(U_i, f_i)\}$ subordinate to $\{U_\alpha\}$ (\cite{deLeon2011methods}). Since each $U_i$ is contained in some $U_\alpha$, if we set $\varphi_i = \varphi_\alpha|_{U_i}$, then $\{(U_i, \varphi_i)\}$ is also an atlas. We define the vector field $\xi$ as 

    $$(\xi_i)_x = 
    \begin{cases} 
    f_i(x) \frac{\partial}{\partial t} & \text{if } x \in U_i \\
    0 & \text{if }  x \notin U_i
    \end{cases}$$

    $\xi = \sum_i \xi_i$ is a vector field over $M$ such that $\xi_x = \frac{\partial}{\partial t}$ in the Darboux coordinates chosen above.
    
    Let $K = $ ker $\Omega \; \cap$ ker $\eta$. Let $G$ be a complement of $K$ such that $\xi_x \in G_x$ for all $x \in M$. This is always possible because it never occurs that $\xi_x = ( \frac{\partial}{\partial t} )_x \in K_x$. Then, $TM = K \oplus G$. $T_MK^*$ has the canonical decomposition $T_MK^* = TM \oplus K^* = G \oplus K \oplus K^*$. Let $p_G: TM \rightarrow K$ be the natural projection such that, if $v = (v_k, v_g) \in TM$, with $v_k \in K$ and $v_g\in G$, then $p_G(v) = p_G((v_k, v_g)) = v_k$. Let $j_G$ be the adjoint of $p_G$, i.e. if $\alpha \in K^*, \; j_G(\alpha) = \alpha \circ p_G$.

    Let $j: K^* \rightarrow T^*M$ be the inclusion. Extending the local Darboux coordinates in $M$, for any element of $K^*$, there exist local coordinates $(x^i, t, z^\mu, b_\mu)$, such that $b_r(\frac{\partial}{\partial z^s}) = \delta_{r, s}$. Then, $\pi^*\Omega = dx^i\wedge dx^{p+i}$ and $\pi^*\eta = dt$ , where $\pi$ is the canonical projection $\pi: K^* \rightarrow M$.

    Any vector $X \in TK^*$ can be expressed in the form 
    $$
    X = \sum_{i=1}^{2p} X^i \frac{\partial}{\partial x^i} + T\frac{\partial}{\partial t} + \sum_{r=1}^k Z^r \frac{\partial}{\partial z^r} + \sum_{s=1}^k B_s \frac{\partial}{\partial b_s},
    $$ 
    where $X^i, T, Z^r, B_s$ are functions of the local coordinates $(x^i, t, z^\mu, b_\mu)$. We have that

    \begin{align*}
        T\pi(X) &= \sum_{i=1}^{2p} X^i \frac{\partial}{\partial x^i} + T \frac{\partial}{\partial t} + \sum_{r=1}^k Z^r \frac{\partial}{\partial z^r},\\
        p_G \circ T\pi(X) &= \sum_{r=1}^k \Big(Z^r + \sum_{i=1}^{2p} A^r_i X^i + C^rT\Big)\frac{\partial}{\partial z^r},
    \end{align*}

    \noindent where $A^r_i, C^r$ depend on $(x^i, t, z^\mu)$ and the projection $p_G$. By definition of $G$, $\frac{\partial}{\partial t} \in G$. Then, the projection of $\frac{\partial}{\partial t}$ by $p_G$ is 0, so $C^r = 0$. Let $\lambda_{p_G}$ be the generalised Liouville form \cite{marle83sous}. Then, 
    $$
    \lambda_{p_G}(X) = \sum_{r=1}^k B_r\Big(Z^r + \sum_{i=1}^{2p} A^r_i X^i\Big).
    $$ 
    $\lambda_{p_G}(X)$ has the same expression than in the proof of theorem \ref{thm:sympl_cois:existence}. Let $\Omega_G = \pi^*\omega - \omega_{p_G}$ be a 2-form on $K^*$. Then, $\Omega_G$ is locally given by equation \eqref{sympl_cois:existence:omega}. $\Omega_G$ is smooth and closed. In the zero section of $K^*$, $b_r = 0$ and we can see that $\Omega_G$ has constant rank $2p + 2k$. So, there exists an neighbourhood of the zero section of $K^*$ such that $\Omega_G$ has constant rank $2p + 2k$.

    Let $\tilde{\eta}$ be the 1-form on $K^*$ given by $\tilde{\eta} = \pi^*\eta$. In local Darboux coordinates $\tilde{\eta} = dt$. We have that $i_{\frac{\partial}{\partial t}}\Omega_G = 0$ by equation \eqref{sympl_cois:existence:omega} and that $\tilde{\eta}(\frac{\partial}{\partial t}) = 1$. As $\Omega_G$ has constant rank $2p + 2k$, the pair $(\Omega_G, \tilde{\eta})$ is cosymplectic over a neighbourhood of $M$ in $K^*$ with Reeb vector field locally given by $\xi = \frac{\partial}{\partial t} \in TM$.

    To see that $M$ is a coisotropic submanifold of $K^*$ , it remains to show that $TM^\perp \subset TM$ in $T_MK^*$. We can do a similar argument as with the symplectic case.  Let's suppose it is not true. Then there exists $x \in M$, $Y \in T_xK^*, Y \not\in T_xM$, such that $\tilde{\eta}(Y) = 0$ and $\Omega_G(X,Y) = 0, \forall X \in T_xM$. If we use Darboux local coordinates, then we get that at least there is some $B_\mu \neq 0$. Then, if we pick $X = \partial/\partial z^\mu$, then $\Omega_G(X,Y) = 1$. 
\end{proof}

\subsection{Local Uniqueness theorem}

To proof the Local Uniqueness theorem we will need to extend to cosymplectic manifolds some of the preliminary results we used in the proof of the uniqueness theorem for the symplectic case. We will extend the results proved by Weinstein (Theorem \ref{sympl:extension:same} and Theorem \ref{thm:sympl:ext} in the Appendix). In order to extend Theorem \ref{sympl:extension:same}, we will use two lemmas:

\begin{lem}
    \label{lemma:cosympl:ext:1}
    Let $(\Omega_0, \eta_0)$, $(\Omega_1, \eta_1)$ be cosymplectic structures on a neighbourhood of $P$, a manifold of dimension $2n + 1$, such that they coincide on $M$. Then, there exists a cosymplectic structure $(\bar{\Omega}_0, \eta_1)$ on a neighbourhood of $P$ such that $(P,\Omega_0, \eta_0)$ and $(P, \bar{\Omega}_0, \eta_1)$ are neighbourhood equivalent, and such that the Reeb vector field $\bar{\xi}_0$ of $(\bar{\Omega}_0, \eta_1)$ is the same as the Reeb vector field $\xi_1$ of $(\Omega_1, \eta_1)$.
\end{lem}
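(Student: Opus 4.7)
The plan is to establish the lemma by a simultaneous Moser argument on a tubular neighbourhood of $M$ in $P$, interpolating between the 2-forms and between the 1-forms.

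First, I would consider the linear interpolation
\begin{equation*}
(\Omega_t,\eta_t) \;:=\; \bigl((1-t)\Omega_0 + t\Omega_1,\ (1-t)\eta_0 + t\eta_1\bigr), \qquad t\in[0,1].
\end{equation*}
Each $\Omega_t$ and $\eta_t$ is closed, and by hypothesis $(\Omega_t,\eta_t)=(\Omega_0,\eta_0)$ at every point of $M$, so $\eta_t\wedge\Omega_t^{n}\neq 0$ on $M$. By continuity, on some tubular neighbourhood $U$ of $M$ the pair $(\Omega_t,\eta_t)$ is cosymplectic for every $t\in[0,1]$; let $\flat_t:=\flat_{(\Omega_t,\eta_t)}$ denote the cosymplectic isomorphism of~\eqref{cosympl:bemol}. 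Since the closed forms $\Omega_1-\Omega_0$ and $\eta_1-\eta_0$ vanish identically on $M$, the relative Poincar\'e lemma, applied to $U$ viewed as a tubular neighbourhood retracting to $M$, furnishes a 1-form $\gamma$ and a function $\alpha$ on $U$, both vanishing on $M$, such that
\begin{equation*}
d\gamma = \Omega_1-\Omega_0, \qquad d\alpha = \eta_1-\eta_0.
\end{equation*}

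The key construction is the time-dependent vector field
\begin{equation*}
X_t \;:=\; \flat_t^{-1}\bigl(-\gamma - \alpha\,\eta_t\bigr) \qquad \text{on } U.
\end{equation*}
By the very definition of $\flat_t$ this single identity is equivalent to the pair $i_{X_t}\Omega_t=-\gamma$ and $\eta_t(X_t)=-\alpha$. Combining these with Cartan's formula and the closedness of $\Omega_t$ and $\eta_t$ yields the Moser identities $L_{X_t}\Omega_t + \dot\Omega_t = 0$ and $L_{X_t}\eta_t + \dot\eta_t = 0$. Since $\gamma$ and $\alpha$ vanish on $M$ and $\flat_t$ is an isomorphism, $X_t$ vanishes on $M$, so by the standard argument its flow $\phi_t$ is defined for all $t\in[0,1]$ on a (possibly smaller) neighbourhood $U'\subseteq U$ of $M$, fixes $M$ pointwise, and satisfies $\phi_t^*\Omega_t=\Omega_0$ and $\phi_t^*\eta_t=\eta_0$.

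Setting $\bar\Omega_0:=\Omega_1$ and $\psi:=\phi_1$, the map $\psi:(U',\Omega_0,\eta_0)\to (U',\bar\Omega_0,\eta_1)$ is then a cosymplectomorphism fixing $M$, which is precisely the required neighbourhood equivalence; and since $\bar\Omega_0=\Omega_1$, the Reeb vector field of $(\bar\Omega_0,\eta_1)$ is $\xi_1$, as demanded.

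\textbf{Main obstacle.} The new feature compared to the symplectic analogue (Theorem~\ref{sympl:extension:same}) is that the Moser equation must now match \emph{both} the 2-form and the 1-form simultaneously. The essential point making this work is that the cosymplectic isomorphism $\flat_t$ packages the two conditions $i_{X_t}\Omega_t=-\gamma$ and $\eta_t(X_t)=-\alpha$ into a single invertible equation; the invertibility holds precisely because $(\Omega_t,\eta_t)$ is cosymplectic on $U$, which in turn follows from the coincidence of the two original structures on $M$.
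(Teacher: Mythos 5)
There is a genuine gap at the heart of your Moser argument: the claim that $\flat_t(X_t)=-\gamma-\alpha\,\eta_t$ is ``equivalent to the pair $i_{X_t}\Omega_t=-\gamma$ and $\eta_t(X_t)=-\alpha$'' is false in general. From the definition \eqref{cosympl:bemol}, if $X=\flat_t^{-1}(\beta)$ then contracting with the Reeb field $\xi_t$ of $(\Omega_t,\eta_t)$ gives $\eta_t(X)=\beta(\xi_t)$ and hence $i_X\Omega_t=\beta-\beta(\xi_t)\eta_t$. For $\beta=-\gamma-\alpha\eta_t$ this yields
\begin{equation*}
\eta_t(X_t)=-\alpha-\gamma(\xi_t),\qquad i_{X_t}\Omega_t=-\gamma+\gamma(\xi_t)\,\eta_t ,
\end{equation*}
so your two Moser identities acquire the error terms $d\bigl(\gamma(\xi_t)\bigr)\wedge\eta_t$ and $-d\bigl(\gamma(\xi_t)\bigr)$ respectively, and $\phi_t^*\Omega_t$, $\phi_t^*\eta_t$ are not constant in $t$. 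The splitting you need holds only if the primitive $\gamma$ satisfies $\gamma(\xi_t)=0$ for \emph{all} $t\in[0,1]$; but here $\xi_t$ is the Reeb field of the interpolated pair $(\Omega_t,\eta_t)$, which varies with $t$ in a nonlinear way (it is not $(1-t)\xi_0+t\xi_1$), so this is an over-determined condition that the relative Poincar\'e lemma cannot be tuned to satisfy. Note also that if your argument did go through, with $\bar\Omega_0:=\Omega_1$ it would prove in one stroke that $(\Omega_0,\eta_0)$ and $(\Omega_1,\eta_1)$ are neighbourhood equivalent, i.e.\ the full extension theorem rather than this lemma --- which should be a warning sign.

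The paper avoids exactly this obstruction by splitting the problem in two. In the present lemma only the $1$-forms are interpolated: the flow is generated by $Z_t=-\varphi N_t$ with $\nu=d\varphi$ and $N_t$ a carefully normalised combination of $\xi_0$ and $\xi_1$ with $\eta_t(N_t)=1$; this both straightens $\eta_0$ onto $\eta_1$ and forces $dg(\xi_0)=\xi_1$, so that $\bar\Omega_0=(g^{-1})^*\Omega_0$ has Reeb field $\xi_1$. Only afterwards (Lemma \ref{lemma:cosympl:ext:2}) are the $2$-forms interpolated, and there the Reeb field $\xi$ is the \emph{same} for all $t$ (since $i_\xi\Omega_t=(1-t)i_\xi\Omega_0+t\,i_\xi\Omega_1=0$), which is what makes it possible to modify the homotopy operator of the relative Poincar\'e lemma so that the primitive satisfies $\phi(\xi)=0$ and the splitting of $\flat_{(\Omega_t,\eta)}^{-1}$ becomes legitimate. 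To repair your proposal you would either have to reproduce this two-step structure, or find a primitive $\gamma$ annihilating the whole $t$-family of Reeb fields, which is not available in general.
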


\begin{proof}
    Let $\nu = \eta_1 - \eta_0$, and let $\eta_t = \eta_0 + t\nu$, for $t$ between 0 and 1. As $\eta_0, \eta_1$ are closed, $d\nu = 0$ and $\nu |_M = 0$. Then, $d\eta_t = 0$ and $\eta_t |_M = \eta_0 |_M$. As $\eta_t$ is non-degenerate over $M$, there is a neighbourhood of $M$ in $P$ such that $\eta_t$ is non-degenerate. We can use the relative Poincaré lemma to get a function $\varphi$ such that $\nu = d\varphi$ and $\varphi|_M = 0$. Let $N_t$ be a smooth time-dependent vector field such that $\eta_t(N_t) = 1$. Let $Z_t = -\varphi N_t$. Then, $\eta_t(Z_t) = -\varphi$. Since $\varphi |_M = 0$, then $Z_t|_M = 0$ and the time-dependent vector field $\{Z_t\}_{t \in [0,1]}$ is integrable to a one-parameter family $\{g_t\}_{t \in [0,1]}$ of diffeomorphisms of neighbourhoods of $M$ in $P$ (lemma \ref{lemma:integrable}). Then, we have that

    \begin{align*}
        \frac{d}{ds}(g_s^*\eta_s)_{s=t} &= g_t^*\Big( \frac{d\eta_s}{ds} \Big)_{s=t} + g_t^*(d(\eta_t(Z_t))) \\
        &= g_t^*(\nu + d(-\varphi)) \\
        &= 0.
    \end{align*}
    
    Let $g = g_1$. Then $g^*\eta_1 = g_1^*\eta_1 = g_0^*\eta_0 = \eta_0$. Let $\bar{\Omega}_0 = (g^{-1})^*\Omega_0$. As $g|_M$ is the identity, $\bar{\Omega}_0|_M = \Omega_0|_M = \Omega_1|_M$. $\bar{\Omega}_0$ is a 2-form of constant rank $2n$ as $g$ is a diffeomorphism. As $Z_t = 0$ on $M$, the flow in $M$ is the identity, and we have that $g|_M$ and $Tg|_M$ are the identity mappings. It remains to check that $(\bar{\Omega}_0, \eta_1)$ is cosymplectic and that the Reeb vector fields $\bar{\xi}_0$ and $\xi_1$ are the same. To prove it, we will use the flexibility we have in picking the vector field $N_t$.

    Let $\xi_t = a(t)\xi_0 + b(t)\xi_1$, where $a,b$ are smooth functions such that $a(0) = 1,\; a(1)= 0, \; b(0) = 0, \; b(1) = 1$. We define $N_t$ as $\frac{\xi_t}{\eta_t(\xi_t)}$. Then, $N_0 = \xi_0$ and $N_1 = \xi_1$. As $g_t$ is the time-dependent flow with infinitesimal generator $Z_t$ and $N_t$ is parallel to $Z_t$, then the image of $\xi_0 = N_0$ by $dg$ is parallel to $N_1 = \xi_1$. Moreover, $\eta_1(dg(\xi_0)) = g^*\eta_1(\xi_0) = \eta_0(\xi_0) = 1$. Then, we must have that $dg(\xi_0) = \xi_1$. Conversely, we have that $(dg^{-1})\xi_1 = \xi_0$. So, we have that $i_{\xi_1}\bar{\Omega}_0 = i_{\xi_o}\Omega_0 \circ (dg^{-1}) = 0$. Then $(\bar{\Omega}_0, \eta_1)$ is cosymplectic in a neighbourhood of $M$ in $P$ with Reeb vector field $\bar{\xi}_0 = \xi_1$.

    We check now that $N_t$ is well-defined, i.e. $\eta_t(\xi_t) \neq 0$. Indeed, we have
     \begin{align*}
        \eta_t(\xi_t) &= a(t)(1-t) + b(t)(1-t)\eta_0(\xi_1) + b(t)t + a(t)t\eta_1(\xi_0) \\
        &= a(t)(1 - t + t \eta_1(\xi_0)) + b(t)(t + (1-t)\eta_0(\xi_1))
    \end{align*}

    If we define $a(t) = (1 - t + t\eta_1(\xi_0))(1-t)$ and $b(t) = (t + (1 - t)\eta_0(\xi_1))t$, the boundary conditions are respected and we have that $\eta_t(\xi_t) > 0$ as long as it does not happen that
    \[
    \begin{cases}
        1 - t + t\eta_1(\xi_0) = 0 \\
        t + (1 - t)\eta_0(\xi_1) = 0
    \end{cases}
    \]

    If we pick the first equation, we have that $t = \frac{1}{1 - \eta_1(\xi_0)}$. As $\eta_1|_M(\xi_0) = 1$, then there exists a neighbourhood of $M$ in $P$ such that $\left| t \right| = \left| \frac{1}{1 - \eta_1(\xi_0)}\right| > 1$, and then there is no solution to the equation. In this neighbourhood we will have that $N_t$ is well defined and that $(\bar{\Omega}_0, \eta_1)$ is cosymplectic.

\end{proof}

\begin{lem}
    \label{lemma:cosympl:ext:2}
    Let $(\Omega_0, \eta)$ and $(\Omega_1, \eta)$ be cosymplectic structures on a neighbourhood of $M$ in $P$, a manifold of dimension $2n + 1$. Suppose the restrictions of both structures to $T_MP$ are equal and their Reeb vector fields coincide. Then there are neighbourhoods $U_1$ and $U_2$ of $M$ in $P$ and a cosymplectomorphism $f$ from $(U_0, \Omega_0, \eta)$ to $(U_1, \Omega_1, \eta)$ such that $f|_M$ and $Tf|_{T_MP}$ are the identity mappings.
\end{lem}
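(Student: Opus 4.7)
The plan is to apply Moser's isotopy trick, keeping $\eta$ fixed and interpolating linearly between $\Omega_0$ and $\Omega_1$. Set $\Omega_t = (1-t)\Omega_0 + t\Omega_1$ and $\nu = \Omega_1 - \Omega_0$. The $2$-form $\nu$ is closed, vanishes on $T_MP$ by hypothesis, and satisfies $i_\xi\nu = i_\xi\Omega_1 - i_\xi\Omega_0 = 0$ because $\xi$ is the common Reeb vector field. Consequently $\Omega_t$ is closed with $i_\xi\Omega_t = 0$, and since $\eta\wedge\Omega_0^n \neq 0$ on $M$ while $\Omega_t|_{T_MP}$ is independent of $t$, continuity ensures that $(\Omega_t, \eta)$ is a cosymplectic structure on a sufficiently small neighbourhood $U$ of $M$ having $\xi$ as Reeb vector field for every $t \in [0,1]$.

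Next I would produce a $1$-form $\mu$ on $U$ satisfying $d\mu = \nu$, $\mu|_{T_MP} = 0$, and $i_\xi\mu = 0$. Applying the relative Poincar\'e lemma through a tubular neighbourhood deformation $h_s\colon U \to U$ (with $h_0 = \iota\circ\pi$ the collapse to $M$ and $h_1$ the identity) produces $\mu = \int_0^1 h_s^*(i_{V_s}\nu)\,ds$, where $V_s$ is the infinitesimal generator of $h_s$; this $\mu$ is a primitive of $\nu$ that vanishes to second order on $M$ because $\nu|_{T_MP} = 0$. To secure the extra requirement $i_\xi\mu = 0$, I would choose $h_s$ to be $\xi$-equivariant, which is possible in the setting where the lemma is applied because $\xi$ is tangent to $M$ (as in the direct theorem) so that the Reeb flow preserves $M$. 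The generators then satisfy $[V_s, \xi] = 0$, and since $i_\xi\nu = 0$ one obtains $i_\xi(h_s^*(i_{V_s}\nu)) = h_s^*(-i_{V_s} i_\xi\nu) = 0$, hence $i_\xi\mu = 0$.

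With $\mu$ in hand, define $X_t := \flat_{(\Omega_t, \eta)}^{-1}(-\mu)$. Then $i_{X_t}\Omega_t + \eta(X_t)\,\eta = -\mu$; pairing with $\xi$ and using $i_\xi\Omega_t = 0$, $\eta(\xi) = 1$, and $i_\xi\mu = 0$ yields $\eta(X_t) = 0$, whence $i_{X_t}\Omega_t = -\mu$. The second-order vanishing of $\mu$ transfers to $X_t$, so the flow $f_t$ of $X_t$ is defined on a neighbourhood of $M$ for every $t \in [0,1]$ and satisfies $f_t|_M = \mathrm{id}$ together with $Tf_t|_{T_MP} = \mathrm{id}$, the latter by the linearised equation $\tfrac{d}{dt}(T_x f_t) = (DX_t)(x)\cdot T_xf_t$ whose generator vanishes on $M$. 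The standard Moser identities
\begin{align*}
\tfrac{d}{dt}(f_t^*\Omega_t) &= f_t^*(L_{X_t}\Omega_t + \nu) = f_t^*(d(i_{X_t}\Omega_t) + \nu) = f_t^*(-d\mu + \nu) = 0, \\
\tfrac{d}{dt}(f_t^*\eta) &= f_t^*(L_{X_t}\eta) = f_t^*(d(\eta(X_t))) = 0
\end{align*}
then identify $f := f_1$ as the desired cosymplectomorphism.

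The main obstacle is the construction of the potential $\mu$: beyond the symplectic requirements $d\mu = \nu$ and $\mu|_{T_MP} = 0$ inherited from Weinstein's argument, the cosymplectic setting forces the extra condition $i_\xi\mu = 0$ so that the Moser vector field preserves $\eta$. This is the genuinely new ingredient in the present proof, and it is handled by exploiting the shared Reeb vector field together with its tangency to $M$ to choose a $\xi$-equivariant tubular retraction, which is not needed in the purely symplectic case.
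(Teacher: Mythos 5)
Your argument is essentially the paper's: the same linear interpolation $\Omega_t$, the same observation that the common Reeb vector field gives $i_\xi\Omega_t=0$ so that $(\Omega_t,\eta)$ stays cosymplectic near $M$, a primitive of $\Omega_1-\Omega_0$ adapted so that it annihilates $\xi$, and the Moser flow of $\flat_{(\Omega_t,\eta)}^{-1}(-\mu)$, whose $\eta$-component vanishes precisely because $\mu(\xi)=0$. The one place where you diverge is the construction of that primitive: the paper keeps the standard tubular neighbourhood but replaces fibrewise scaling by a deformation $\tilde\pi_t$ that multiplies by $t$ in all directions \emph{except} the one spanned by $\tilde\xi$, so that $\tilde\pi_0$ retracts onto a line bundle over $M$ on which $\omega$ already vanishes (using $\omega|_M=0$ and $i_\xi\omega=0$); this needs no assumption that $\xi$ is tangent to $M$ and so proves the lemma as stated. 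You instead retract all the way to $M$ via a $\xi$-equivariant tubular neighbourhood, which forces you to assume $\xi\in TM$ --- true in the application to coisotropic submanifolds, but a genuine restriction relative to the statement, and the existence of a $\xi$-equivariant retraction is itself a construction you would need to justify (e.g.\ by transporting a tubular neighbourhood along the flow of $\xi$). On the other hand, you are more careful than the paper on one point: you invoke the second-order vanishing of $\mu$ (from $\nu|_{T_MP}=0$, not just $\nu|_{TM}=0$) to conclude $Tf|_{T_MP}=\mathrm{id}$, whereas the paper only records $Y_t|_M=0$, which by itself gives $f|_M=\mathrm{id}$ but not the statement about $Tf$.
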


\begin{proof}
    Let $\omega = \Omega_1 - \Omega_0$ and let $\Omega_t = \Omega_0 + t\omega$, for $t$ between 0 and 1. As $\Omega_0, \Omega_1$ are closed, $d\omega = 0$ and $\omega |_M = 0$. Then, $d\Omega_t = 0$ and $\Omega_t |_M = \Omega_0 |_M$. As $\Omega_t$ has constant rank $2n$ over $M$, there is a neighbourhood of $M$ in $P$ such that $\Omega_t$ has constant rank $2n$. Let $\xi$ be the Reeb vector field of both cosymplectic structures. $i_\xi\Omega_t = (1-t)i_\xi\Omega_0 + ti_\xi\Omega_1 = 0$ and $\eta(\xi) = 1$ by definition, so $(\Omega_t, \eta)$ is a cosymplectic structure over a neighbourhood of $M$ in $P$. We can use the relative Poincar\'e's lemma (lemma \ref{relative_poincare}) to get a 1-form $\phi$ such that $\omega = d\phi$ and $\phi|_M = 0$.

    We are interested in picking $\phi$ such that $\phi(\xi) = 0$. In order to do that, we will slightly change the 1-form $\phi$ given by the Poincar\'e Lemma \ref{relative_poincare}. Let $\alpha$ be a diffeomorphism from a tubular neighbourhood of $M$ into a neighbourhood $V$ of the zero-section of a vector bundle over $M$. Let $\tilde{\xi} = (d\alpha) \xi$. We define $\tilde{\pi}_t: V \rightarrow V$ as the multiplication by $t$ in all directions except the one given by $\tilde{\xi}$, for $t\in [0,1]$. Note that $\tilde{\pi}_t$ is the same as in the proof of the Poincaré Lemma except for a slight change. We similarly define $X_t$ as ($\frac{d\tilde{\pi}_s}{ds})_{s=t}$.% Then, 
    %$$\frac{d}{ds}(\tilde{\pi}_s^*\omega)_{s=t} = d[\tilde{\pi}_t^*(\flat_\omega(X_t)].$$

    The image of $\tilde{\pi}_0$ is a 1-dimensional vector bundle over $M$ whose fibres are in the directions defined by $\tilde{\xi}$. In this bundle, $(d\alpha^{-1})^*\omega = 0$, as it is zero on $M$ and also cancels on the directions defined by $\tilde{\xi}$  ($\; i_{\tilde{\xi}} (d\alpha^{-1})^*\omega = i_\xi \omega \circ (d\alpha^{-1}) = 0)$. Then, we have that $\tilde{\pi}_0^*(d\alpha^{-1})^*\omega = 0$. We are in the same conditions than in the proof of the relative Poincaré lemma, so if we define
    $$\phi = (\alpha^*)\int_0^1 \pi_t^*(\flat_{(d\alpha^{-1})\omega}(X_t)) dt,$$
    then, $\omega = d\phi$ and $\phi|_M = 0$.
    $$\phi(\xi) = \Big(\int_0^1 \tilde{\pi}_t^*(\flat_{(d\alpha^{-1})\omega}(X_t)) dt \Big)(\tilde{\xi}) = \int_0^1 \tilde{\pi}_t^*(\flat_{(d\alpha^{-1})\omega}(X_t) (\tilde{\xi})) = 0,$$
    since, by definition, $\tilde{\pi}_t (\tilde{\xi}) = \tilde{\xi}$ and $\omega(\xi) = 0$.
    
    % In the sketch on the proof $\phi$ was built as $\phi = \int_0^1 \pi_t^*(\flat_\omega(X_t))$. As $\pi_t$ is the multiplication of a vector in a tubular neighbourhood If we evaluate $\phi(\xi)$,

    % Me falta ver que puedo conseguir un $\phi$ tal que $\phi(\xi) = 0$. Intuitivamente, si me restrinjo a un espacio complementario a $\xi$, entonces $\omega$ tiene rango máximo y se podría aplicar el Lema de Poincaré ahí. He visto en la demostración del teorema generalizado de Darboux que se puede usar el teorema de Frobenius para campos de vectores, pero no sé si puedo aplicarlo en este caso.
    
    Let $Y_t = \flat_{(\Omega_t, \eta)}^{-1}(- \phi)$, i.e. $i_{Y_t}\Omega_t + \eta(Y_t)\eta = \phi$. If we evaluate on $\xi$, we get that $0 = \phi(\xi) = \eta(Y_t)$. 
    %$\flat_{(\Omega_t, \eta)}$ is defined in definition \ref{cosympl:bemol} and is an isomorphism of $T_xP$ and $T_x^*P$. 
    Since $\phi |_M = 0$, then $Y_t|_M = 0$ and the time-dependent vector field $\{Y_t\}_{t \in [0,1]}$ is integrable to a one-parameter family $\{f_t\}_{t \in [0,1]}$ of diffeomorphisms of neighbourhoods of $M$ in $P$ by lemma \ref{lemma:integrable}.

    Then, we have that

    \begin{align*}
        \frac{d}{ds}(f_s^*\Omega_s)_{s=t} &= f_t^*\Big( \frac{d\Omega_s}{ds} \Big)_{s=t} + f_t^*(d(i_{Y_t}\Omega_t)) \\
        &= f_t^*(\omega + d(-\phi)) \\
        &= 0.
    \end{align*}

    Let $f = f_1$. Then $f^*\Omega_1 = f_1^*\Omega_1 = f_0^*\Omega_0 = \Omega_0$. In the same way,

     \begin{equation*}
        \frac{d}{ds}(f_s^*\eta)_{s=t} = f_t^*\Big( \frac{d\eta}{ds} \Big)_{s=t} + f_t^*(d(\eta(Y_t))) = 0.
    \end{equation*}

    So, $f^*\eta = \eta$, and $f$ is a cosymplectomorphism. As $Y_t = 0$ on $M$, the flow in $M$ is the identity, and we have that $f|_M$ and $Tf|_M$ are the identity mappings.
\end{proof}
        
    %Let $\tilde{P} = P \times \mathbb{R}$ and the differentiable 2-form $\tilde{\Omega}_i = p^*\Omega_i + p^*\eta_i \wedge dt$, where $i \in \{1, 2\}$, $t$ is the coordinate function on $\mathbb{R}$, and $p: \tilde{P} \rightarrow P$ is the canonical projection. Then, by proposition 2.3, $(\Tilde{P}, \tilde{\Omega}_i)$ is a symplectic manifold. Let $\tilde{M} = M \times \mathbb{R}$. Then, $\tilde{M}$ is coisotropically embedded into $(P, \tilde{\Omega}_i)$ EXPLICAR.

    %Using Weinstein's theorem, we create a time-dependent vector field $Y_t$ such that the flux $f_1^*$ gives a symplectomorphism between $(P, \tilde{\Omega}_1)$ and $(P, \tilde{\Omega}_2)$.

\begin{theorem}
    Let $(\Omega_0, \eta_0)$ and $(\Omega_1, \eta_1)$ cosymplectic structures on $P$ whose restrictions to $T_MP$ are equal. Then there are neighbourhoods $U_1$ and $U_2$ of $M$ in $P$ and a cosymplectomorphism $\phi$ from $(U_0, \Omega_0, \eta_0)$ to $(U_1, \Omega_1, \eta_1)$ such that $\phi|_M$ and $T\phi|_{T_MP}$ are the identity mappings
\end{theorem}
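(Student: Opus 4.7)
The plan is to chain the two preceding lemmas: Lemma \ref{lemma:cosympl:ext:1} lets us reduce to the situation where the two 1-forms (and hence the two Reeb vector fields) agree, and Lemma \ref{lemma:cosympl:ext:2} handles that reduced situation. Since both intermediate maps fix $M$ pointwise and induce the identity on $T_MP$, their composition will too, and a composition of cosymplectomorphisms is a cosymplectomorphism, so the resulting map will have all the required properties.

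Concretely, the first step is to apply Lemma \ref{lemma:cosympl:ext:1} to the pair $(\Omega_0,\eta_0)$, $(\Omega_1,\eta_1)$. This yields open neighbourhoods of $M$ together with a cosymplectomorphism $g$ from $(\Omega_0,\eta_0)$ to a new cosymplectic structure $(\bar\Omega_0,\eta_1)$, whose Reeb vector field coincides with $\xi_1$. Because the vector field $Z_t$ used in the construction of $g$ vanishes on $M$, the diffeomorphism $g$ fixes $M$ pointwise and $Tg$ restricted to $T_MP$ is the identity; hence $\bar\Omega_0|_{T_MP}=((g^{-1})^*\Omega_0)|_{T_MP}=\Omega_0|_{T_MP}=\Omega_1|_{T_MP}$. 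Thus the two cosymplectic structures $(\bar\Omega_0,\eta_1)$ and $(\Omega_1,\eta_1)$ share the same 1-form, have the same Reeb vector field, and agree on $T_MP$.

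The second step is to feed $(\bar\Omega_0,\eta_1)$ and $(\Omega_1,\eta_1)$ into Lemma \ref{lemma:cosympl:ext:2}, which produces neighbourhoods $V_0,V_1$ of $M$ in $P$ and a cosymplectomorphism $f\colon (V_0,\bar\Omega_0,\eta_1)\to (V_1,\Omega_1,\eta_1)$ with $f|_M=\mathrm{id}_M$ and $Tf|_{T_MP}=\mathrm{id}_{T_MP}$. Setting $\phi = f\circ g$, restricted to a suitable neighbourhood on which the composition is defined, gives a cosymplectomorphism from $(\Omega_0,\eta_0)$ to $(\Omega_1,\eta_1)$; the identity properties on $M$ and on $T_MP$ follow from the chain rule applied to $f$ and $g$, both of which enjoy these properties individually.

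The main (essentially only) obstacle is the bookkeeping needed to ensure that the hypotheses of Lemma \ref{lemma:cosympl:ext:2} are really met after applying Lemma \ref{lemma:cosympl:ext:1}: one must verify that the map $g$ built in Lemma \ref{lemma:cosympl:ext:1} not only fixes $M$ but has identity differential on $T_MP$, so that pulling $\Omega_0$ back by $g^{-1}$ does not disturb the agreement on $T_MP$. This is the point where the precise form of the vector field $Z_t=-\varphi N_t$ matters: since $\varphi|_M=0$ and $d\varphi|_M=\nu|_{T_MP}=0$ (the latter because $\eta_0$ and $\eta_1$ coincide on $T_MP$ by hypothesis), both $Z_t$ and its first-order jet along $M$ vanish, which forces $Tg|_{T_MP}=\mathrm{id}$. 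Once this linearisation check is in place, the rest of the argument is a direct concatenation.
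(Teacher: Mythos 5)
Your proposal is correct and follows the same route as the paper: apply Lemma \ref{lemma:cosympl:ext:1} to replace $(\Omega_0,\eta_0)$ by an equivalent structure $(\bar\Omega_0,\eta_1)$ sharing the 1-form and Reeb field with $(\Omega_1,\eta_1)$, then apply Lemma \ref{lemma:cosympl:ext:2} and compose the two cosymplectomorphisms on a common neighbourhood. Your extra check that $Tg|_{T_MP}=\mathrm{id}$ (via $\varphi|_M=0$ and $d\varphi|_{T_MP}=\nu|_{T_MP}=0$), so that the hypotheses of the second lemma are genuinely satisfied, is a worthwhile detail that the paper only states in passing.
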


\begin{proof}
    It follows combining the two previous lemmas. From lemma \ref{lemma:cosympl:ext:1} we get a cosymplectomorphsim $g$ between two neighbourhoods $U_0$, $U_1$ of $M$ with cosymplectic structures $(\Omega_0, \eta_0)$ and $(\bar{\Omega}_0, \eta_1)$. Then, we apply lemma \ref{lemma:cosympl:ext:2} and get a cosymplectomorphism $f$ from a neighbourhood $V_1 \subset U_1$ to a neighbourhood $V_2$ with cosymplectic structures $(\bar{\Omega}_0, \eta_1)$ and $(\Omega_1, \eta_1)$. Then, $f \circ (g|_{g^{-1}(V_1)})$ is a cosymplectomorphism from $(g^{-1}(V_1), \Omega_0, \eta_0)$ to $(V_2, \Omega_1, \eta_1)$. 
\end{proof}

We have now completed the proof of the extension of Theorem \ref{sympl:extension:same}. The key idea was to first approach the 1-forms, as trying to estimate a cosymplectic structure for $\Omega_t$ without knowing $\xi_t$ is much more complicated.

The statement for the extension of Theorem \ref{thm:sympl:ext} is the follwing:

\begin{theorem}
    \label{thm:cosympl:ext}
    Let $P$ be a manifold and let $M$ be a closed submanifold of $P$. Let $(\Omega_0, \eta_0)$ and $(\Omega_1, \eta_1)$ be cosymplectic structures on $P_0$ and $P_1$ respectively, and let $M$ be embedded in both $P_0$ and $P_1$. Then, if $(T_MP_0, \Omega_0, \eta_0)$ and $(T_MP_1, \Omega_1, \eta_1)$ are isomorphic as symplectic vector bundles, there are neighbourhoods $U_0, U_1$ of $M$ in $P_0, P_1$ respectively and a cosymplectomorphism $\psi: (U_0, \Omega_0, \eta_0) \rightarrow (U_1, \Omega_1, \eta_1)$ such that $\psi |_M$ is the identity. 
\end{theorem}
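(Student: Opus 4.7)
The plan is to reduce the statement to the extension theorem proved just above (obtained by combining Lemmas \ref{lemma:cosympl:ext:1} and \ref{lemma:cosympl:ext:2}) by first transporting one of the cosymplectic structures onto a neighbourhood where both are defined and where they already agree along $T_MP$. Let $\Phi : T_MP_0 \to T_MP_1$ denote the given isomorphism of cosymplectic vector bundles, so that $\Phi^*(\Omega_1|_{T_MP_1}) = \Omega_0|_{T_MP_0}$ and $\Phi^*(\eta_1|_{T_MP_1}) = \eta_0|_{T_MP_0}$, and $\Phi$ covers the identity on $M$ (in particular $\Phi|_{TM} = \mathrm{id}_{TM}$).

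First I would promote $\Phi$ to a diffeomorphism between tubular neighbourhoods of $M$. Choose auxiliary Riemannian metrics on $P_0$ and $P_1$ so that $M$ has normal bundles $\nu_0 \subset T_MP_0$ and $\nu_1 \subset T_MP_1$. Since $\Phi$ is the identity on $TM$, it restricts to a vector bundle isomorphism $\Phi^\nu : \nu_0 \to \nu_1$ over $\mathrm{id}_M$. Using the associated exponential maps $\exp_0$, $\exp_1$, which are diffeomorphisms from open neighbourhoods of the zero-section of the respective normal bundles onto tubular neighbourhoods of $M$, define
$$\varphi = \exp_1 \circ \Phi^\nu \circ \exp_0^{-1} : U_0 \longrightarrow V \subset P_1.$$
By construction $\varphi|_M = \mathrm{id}_M$ and $T\varphi|_{T_MP_0} = \Phi$.

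Next I would pull back the cosymplectic structure of $P_1$ to $U_0$, setting $\tilde\Omega_1 = \varphi^*\Omega_1$ and $\tilde\eta_1 = \varphi^*\eta_1$. Because $\varphi$ is a diffeomorphism, $(\tilde\Omega_1, \tilde\eta_1)$ is again a cosymplectic structure on $U_0$, and because $T\varphi|_{T_MP_0} = \Phi$ and $\Phi$ preserves both forms, we obtain
$$\tilde\Omega_1|_{T_MP_0} = \Omega_0|_{T_MP_0}, \qquad \tilde\eta_1|_{T_MP_0} = \eta_0|_{T_MP_0}.$$
Thus $(\Omega_0,\eta_0)$ and $(\tilde\Omega_1, \tilde\eta_1)$ are two cosymplectic structures on $U_0$ whose restrictions to $T_MP_0$ coincide. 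Applying the theorem obtained from the two preceding lemmas produces neighbourhoods $W_0, W_1 \subset U_0$ of $M$ and a cosymplectomorphism $h : (W_0, \Omega_0, \eta_0) \to (W_1, \tilde\Omega_1, \tilde\eta_1)$ with $h|_M = \mathrm{id}_M$. The required map is then $\psi = \varphi \circ h$, which sends $W_0$ cosymplectically onto $\varphi(W_1) \subset P_1$ and restricts to the identity on $M$.

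The only delicate point is the construction of $\varphi$: one needs a tubular diffeomorphism realising $\Phi$ as its derivative along $M$, and the exponential-map construction above accomplishes this cleanly. All the hard analytic content — the Moser deformation arguments needed to straighten out the $1$-form first and then the $2$-form while keeping the Reeb vector field coherent — has already been absorbed into the two preceding lemmas, so what remains here is essentially the assembly of standard tubular-neighbourhood techniques with the extension theorem already at our disposal.
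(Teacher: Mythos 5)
Your proposal is correct and follows essentially the same route as the paper, which simply defers to the symplectic-case argument of Theorem \ref{thm:sympl:ext}: use a tubular-neighbourhood diffeomorphism realising the bundle isomorphism along $M$, pull back $(\Omega_1,\eta_1)$ so that the two cosymplectic structures agree on $T_MP_0$, and then invoke the theorem obtained from Lemmas \ref{lemma:cosympl:ext:1} and \ref{lemma:cosympl:ext:2}. Your write-up is in fact more explicit than the paper's (e.g.\ in arranging $T\varphi|_{T_MP_0}=\Phi$; just note that one should take $\nu_1:=\Phi(\nu_0)$ as the chosen complement of $TM$ in $T_MP_1$ rather than a metric-orthogonal one), so no gap remains.
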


The proof of this result is equivalent to the proof of the symplectic case, given in Theorem \ref{thm:sympl:ext} (Appendix A). We now have all the preliminary tools needed to prove the local uniqueness theorem.

\begin{theorem}
    \textbf{(Local Uniqueness) }All coisotropic embeddings of $(M, \Omega, \eta)$ are neighbourhood equivalent.
\end{theorem}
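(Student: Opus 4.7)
The plan is to deduce the result from Theorem \ref{thm:cosympl:ext} applied directly to the two embeddings $j_1 : M \hookrightarrow (P_1, \Omega_1, \eta_1)$ and $j_2 : M \hookrightarrow (P_2, \Omega_2, \eta_2)$. Since both embeddings induce the same precosymplectic structure $(\Omega, \eta)$ on $M$, the pullbacks to $TM$ already agree; what remains is to produce a cosymplectic vector bundle isomorphism $(T_{j_1(M)} P_1, \Omega_1, \eta_1) \cong (T_{j_2(M)} P_2, \Omega_2, \eta_2)$ covering the identification $j_2 \circ j_1^{-1}$ of the base. Feeding this bundle isomorphism into Theorem \ref{thm:cosympl:ext} then yields a cosymplectomorphism between neighbourhoods of $j_1(M)$ and $j_2(M)$ extending $j_2 \circ j_1^{-1}$, which is precisely a neighbourhood equivalence.

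The cosymplectic vector bundle isomorphism is built using the characteristic bundle $K = \ker \Omega \cap \ker \eta$ of $M$, following the same strategy as in the existence theorem. Both normal bundles $\nu_i = T_{j_i(M)} P_i / T j_i(TM)$ have rank $k = \dim P - \dim M$, equal to that of $K$. The assignment $(v, [n]) \mapsto \Omega_i(T j_i(v), n)$ defines a pairing $K \times \nu_i \to \R$ which is well-defined because $v \in K$ is $\Omega$-orthogonal to all of $TM$, and non-degenerate because on the ambient cosymplectic manifold $P_i$ one has $\ker \Omega_i \cap \ker \eta_i = 0$, so a nontrivial $v \in K$ pairing to zero with every normal direction would contradict the cosymplectic non-degeneracy. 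This gives canonical identifications $\nu_1 \cong K^* \cong \nu_2$; choosing, as in the proof of Theorem \ref{thm:cosympl_cois:existence}, complements $N_i \subset \ker \eta_i$ with $T_{j_i(M)}P_i = Tj_i(TM) \oplus N_i$ and transporting them through these identifications produces a bundle isomorphism $\Phi$ that is the identity on $TM$ and matches the $\eta_i$ automatically (both vanish on $N_i$ and agree on $TM$).

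The hard part is arranging that $\Phi$ also matches $\Omega_i$ on the $N_i \times N_i$ block, since this piece is not determined by the precosymplectic data on $M$. As in Gotay's symplectic argument, this is fixed by a secondary choice of complement $N_i$: the space of complements inside $\ker \eta_i$ compatible with the already-fixed pairing $N_i \times K \cong K^* \times K$ is an affine space, and a Darboux-type normalisation fibrewise reduces the self-pairings $\Omega_i|_{N_i \times N_i}$ to a common standard form (in the Weinstein tubular model this is just the observation that $\Omega_G$ has an explicit normal form along the zero section). Because $\xi \in TM$ by the coisotropic condition, the Reeb directions are entirely contained in the base and decouple from this adjustment, so the whole construction goes through as in the symplectic case. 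Once $\Phi$ is normalised in this way, Theorem \ref{thm:cosympl:ext} closes the argument and $j_1, j_2$ are neighbourhood equivalent.
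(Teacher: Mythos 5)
Your overall route is the same as the paper's: reduce the statement to producing an isomorphism of cosymplectic vector bundles $(T_{j_1(M)}P_1,\Omega_1,\eta_1)\cong(T_{j_2(M)}P_2,\Omega_2,\eta_2)$ covering the identity on $M$, and then invoke Theorem \ref{thm:cosympl:ext}. The identification $\nu_i\cong K^*$ via the pairing $(v,[n])\mapsto\Omega_i(v,n)$ is correct, as is the remark that $\eta_i$ is matched for free once $N_i\subset\ker\eta_i$. The gap is in your bookkeeping of the blocks of $\Omega_i$: you check $TM\times TM$ (both restrict to $\Omega$), you fix $K\times N_i$ by the $K^*$-identification, and you flag $N_i\times N_i$ as ``the hard part''; but the block $\Omega_i(w,n)$ with $w\in TM\setminus K$ and $n\in N_i$ is also not determined by the precosymplectic data on $M$, and your $\Phi$ gives no reason for it to agree for the two embeddings. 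A concrete obstruction: the ambient Reeb fields are tangent to $M$ but their restrictions are determined by $(\Omega,\eta)$ only up to a section of $K$, so in general $\xi_1|_M\neq\xi_2|_M$ (e.g.\ $M=(\mathbb{R}^3,0,dt)$ embedded in $\mathbb{R}^5$ with $db_1\wedge dz^1+db_2\wedge dz^2$ versus $db_1\wedge(dz^1-dt)+db_2\wedge dz^2$, whose Reeb fields restrict to $\partial_t$ and $\partial_t+\partial_{z^1}$). Then $\Omega_1(\xi_1|_M,n)=0$ for every $n$, while $\Omega_2(\xi_1|_M,\Phi n)=\Omega_2(\xi_1|_M-\xi_2|_M,\Phi n)$ is the nondegenerate $K\times K^*$ pairing applied to a nonzero element of $K$, so it cannot vanish for all $n$. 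Hence no $\Phi$ that is the identity on $TM$ can intertwine the two structures in that situation, and no re-choice of the complements $N_i$ alone repairs it: you must also correct $\Phi$ on $TM$ (the shear $w\mapsto w+\eta(w)(\xi_2|_M-\xi_1|_M)$ preserves $(\Omega|_{TM},\eta|_{TM})$ and moves $\xi_1|_M$ to $\xi_2|_M$), and then shear $N_2$ into $TM$ to kill the remaining $TM\times N$ discrepancy — which is possible because the discrepancy $1$-forms annihilate $K$ and the Reeb direction, hence lie in the image of $\flat_{\Omega|_{TM}}$ — before attacking the $N\times N$ block.

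The paper organises the construction so that these corrections never arise: it fixes a complement $G$ of $K$ in $TM$ with $\xi\in G$, shows $T_MP=G\oplus G^\perp$, and takes the complement of $TM$ inside $G^\perp$, so the cross-block with $G$ vanishes identically by the definition of the cosymplectic orthogonal; it then observes that $K$ is a Lagrangian subbundle of the symplectic bundle $(G^\perp,\bar{\Omega}|_{G^\perp})$, so that Proposition \ref{prop:Lagrangian:symplectomorphism} performs in one step the fibrewise ``Darboux-type normalisation'' you allude to, yielding the model $(G\oplus K\oplus K^*,\Omega_G,\eta_G)$ depending only on $(M,\Omega,\eta)$ and the splitting. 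If you restructure your argument so that $N_i$ is chosen inside $G^\perp_i$ rather than merely inside $\ker\eta_i$ (as in the proof of Theorem \ref{thm:cosympl_cois:existence}), your construction becomes the paper's and the gap closes.
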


\begin{proof}
    Let $(M, \Omega, \eta)$ be coisotropically embedded in $(P, \bar{\Omega}, \bar{\eta})$. $K = TM^\perp \subset TM$. We define $G$ to be a compliment of $K$ in the same way as in Theorem \ref{thm:cosympl_cois:existence}, i.e. $TM = K \oplus G$ and $\xi_x \in G_x \;\forall x \in M$.

    We have $K = TM^\perp = (K \oplus G)^\perp = K^\perp \cap G^\perp = (TM \; \cap \; \text{ker } \bar{\eta})  \cap G^\perp$. By definition of cosymplectic complement, $G^\perp \subset \text{ ker } \bar{\eta}$, so $G^\perp = G^\perp \cap \text{ ker } \bar{\eta}$. Then, $G \cap G^\perp = G^\perp \cap \text{ ker } \bar{\eta} \cap TM \cap G = K \cap G = \emptyset$. Since
    $(G \oplus G^\perp)^\perp = G^\perp \cap G \cap \text{ ker } \bar{\eta} = \{0\}$, then $T_MP = G \oplus G^\perp$.

    Now, $K \subset G^\perp$ and, since $K^\perp \cap G^\perp = K$, then $K$ is a Lagrangian subbundle of $G^\perp$. Then, by proposition \ref{prop:Lagrangian:symplectomorphism}, there exists a symplectomorphism  $\varphi: (G^\perp, \bar{\Omega} \mid_{G^\perp}) \rightarrow (K \oplus K^*, \omega_K)$, such that $\varphi |_K = Id$, where $\omega_K$ is the canonical sympletic form in $K \oplus K^*$. Then, there is an induced diffeormorphism $\psi: G \oplus G^\perp \rightarrow G \oplus K \oplus K^*$ given by $\psi(g_1, g_2) = (g_1, \varphi(g_2))$, for $g_1 \in G, g_2 \in G^\perp$.

    Let $\Omega_G$ be the pullback of $\bar{\Omega}$ by $\psi^{-1}$. We have $\psi^{-1}(G \oplus K) = TM$ since $\psi^{-1} |_G$ and $\varphi^{-1} |_K$ are the identity and $K \oplus G = TM$. So, $\Omega_G |_{G \oplus K} = \Omega |_{TM}$, by definition of coisotropic embedding. Thus, $\psi^{-1}(K \oplus K) = G^\perp$. Since $\varphi$ is a symplectomorphism, then we have that $\Omega_G |_{K \oplus K^*} = (\varphi^{-1})^*\Omega|_{G^\perp} = \omega_K$. Also, using that $\Omega_G$ is defined as the pullback of a diffeomorphism, we have that in $G \oplus K \oplus K^*$, $G^\perp = K \oplus K^*$. We can decompose any element $X$ of $G \oplus K \oplus K^*$ as $X_1 + X_2$, where $X_1 \in G, X_2 \in K \oplus K^*$. Then, by linearity, 
    \begin{align*}
        \Omega_G(X, Y) &= \Omega_G(X_1 + X_2, Y_1 + Y_2) \\
        &= \Omega_G(X_1, Y_1) + \Omega_G(X_1, Y_2) + \Omega_G(X_2, Y_1) + \Omega_G(X_2, Y_2) \\
        &= \Omega(X_1, Y_1) + \omega_K(X_2, Y_2),
    \end{align*}

\noindent   where $\Omega_G(X_1, Y_2) = 0$ and $\Omega_G(X_2, Y_1) = 0$ because one of the vectors is in $G$ and the other in $G^\perp$. So, $\Omega_G = \pi^*\Omega + j_G^*\omega_K$. Here, $\pi$ and $j_G$ are the same mappings as the ones defined in the existence theorem \ref{thm:cosympl_cois:existence}. Note also that $j_G^*\omega_K = -\omega_{p_G}$ (as defined by \cite{marle83sous}).

    Let $\eta_G$ be the pullback of $\bar{\eta}$ by $\psi^{-1}$. $\eta_G |_{K \oplus K^*} = 0$ and $\eta_G |_{G} = \eta$. Consequently, the cosymplectic vector bundle $(G \oplus K \oplus K^*, \Omega_G, \eta_G)$ depends only upon $M, \Omega, \eta$ and the decomposition $K \oplus G$ of $TM$ and, thus, it is independent of the embedding space $(P, \bar{\Omega}, \bar{\eta})$. If $(P', \Omega', \eta')$ is another cosymplectic manifold in which $(M, \Omega, \eta)$ is coisotropically embedded, then, the previous constructions provide vector bundle cosymplectomorphisms 
    $$
    (T_MP, \bar{\Omega}, \bar{\eta}) \sim (G \oplus K \oplus K^*, \Omega_G, \eta_G) \sim (T_MP', \Omega', \eta')
    $$ 
    for some fixed (but irrelevant) splitting.

    Next, theorem \ref{thm:cosympl:ext} implies that there exist neighbourhoods of $M$ $U_0, U_1$, in $P$ and $P'$ respectively, and a cosymplectomorphism $\psi: (U_0, \bar{\Omega}, \bar{\eta}) \rightarrow (U_1, \Omega', \eta')$ such that $\psi|_M$ is the identity. So, $(P_0, \bar{\Omega}, \bar{\eta})$ and $(P_1, \Omega', \eta')$ are neighbourhood equivalent.
\end{proof}

\section{Conclusions and future work}

In this paper we have extended the so-called coisotropic embedding theorem for presymplectic manifolds to the context of precosymplectic manifolds. The reason for this extension lies in the need to have a natural geometrical framework for time-dependent Hamiltonian or Lagrangian systems. Although this extension may seem simple, it has nevertheless required devising new uses of key results such as the Relative Poincaré Lemma and various results of Alan Weinstein.

We propose the following challenges for future work:

\begin{enumerate}

\item To apply the results obtained in this current work to develop a regularization method similar to that developed by Alberto Ibort and Jes\'us Mar{\'\i}n-Solano \cite{ibort-marin}.

\item In recent years, there has been a lot of attention to contact Hamiltonian and Lagrangian systems (see, for instance, \cite{contact}). We would like to obtain a similar coisotropic embedding Theorem in this case. The relevance of contact systems is that they experiment dissipation (in the Lagrangian description, these systems are referred as Lagrangians depending on the action). They are very relevant in many fields, among them, thermodynamics.

\end{enumerate}

\section*{Acknowledgments}

We acknowledge financial support of the 
{\sl Ministerio de Ciencia, Innovaci\'on y Universidades} (Spain), grants PID2021-125515NB-C21 and RED2022-134301-T.
We also acknowledge financial support from the Severo Ochoa Programme for Centers of Excellence in R\&D (CEX2019-000904-S). Pablo Soto would like to thank the JAE Intro program of the CSIC for providing me with an introductory research grant at the Institute of Mathematical Sciences (ICMAT).

\bibliographystyle{plain}
\bibliography{sample}

\appendix

\section{Coisotropic embeddings of presymplectic manifolds}

In this section, we present different results regarding coisotropic submanifolds of symplectic manifolds, based on the work of Gotay \cite{gotay1982coisotropic}, Schaetz and Zambon \cite{zambon2014equivalences} and Marle \cite{marle83sous}. These statements have been generalised to cosymplectic manifolds in this paper.

The main result is the following: 

\begin{theorem}
    \label{thm:coisotropic:global}
    Every coisotropic submanifold has an induced pre-symplectic structure. Additionally, every pre-symplectic manifold can be coisotropically embedded into a symplectic manifold. This embedding is unique up to a local symplectomorphism.
\end{theorem}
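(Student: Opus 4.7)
The plan is to split the statement into its three assertions and establish them in sequence, with the cosymplectic arguments of Section~3 serving as a blueprint (and, in fact, a harder model than the one needed here). Throughout I write $K=\ker\omega$ for the characteristic distribution of a presymplectic form and $TN^\perp$ for the symplectic complement of $TN$ inside $(TP,\omega)$.

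For the direct assertion, I would fix $x\in M$ and analyse $\flat_{j^*\omega(x)}\colon T_xM\to (T_xM)^*$. Its kernel is $T_xM\cap T_xM^\perp$, and the coisotropy hypothesis $T_xM^\perp\subset T_xM$ reduces this intersection to $T_xM^\perp$ itself. The standard dimension identity $\dim T_xM^\perp=\dim P-\dim M$ then yields $\mathrm{rank}\,j^*\omega(x)=2\dim M-\dim P$, a constant independent of $x$; since $j^*\omega$ is also closed, it is presymplectic. For the existence part, starting from $(M,\omega)$ of rank $2p$ on a manifold of dimension $2p+k$, I would choose a complement $G$ with $TM=K\oplus G$, form the dual bundle $\pi\colon K^*\to M$, and construct on $K^*$ a 2-form $\omega_G=\pi^*\omega-\omega_{p_G}$, where $\omega_{p_G}=d\lambda_{p_G}$ is built from the Marle generalised Liouville form $\lambda_{p_G}$ associated with the projection $p_G\colon TM\to K$ along $G$. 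In adapted coordinates $(x^i,z^\mu,b_\mu)$ with $(x^i,z^\mu)$ Darboux on $M$ and $b_\mu$ the dual fibre coordinates on $K^*$, one checks that $\omega_G$ has maximal rank $2p+2k$ on the zero section; by constant-rank stability this persists in a neighbourhood, so $\omega_G$ is symplectic there. The zero-section inclusion is coisotropic because any vector $\omega_G$-orthogonal to all of $TM$ must annihilate the fibre pairing $B_\mu\,\partial/\partial b_\mu$ against $\partial/\partial z^\mu$, forcing it to lie in $TM$.

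For local uniqueness, given two coisotropic embeddings $j_s\colon M\hookrightarrow(P_s,\omega_s)$, $s=0,1$, I would produce a symplectic vector bundle isomorphism $T_MP_0\cong T_MP_1$ covering the identity on $M$, and then invoke Theorem~\ref{thm:sympl:ext} to upgrade it to a local symplectomorphism between neighbourhoods of $M$ in the two targets. To obtain the bundle isomorphism, I split $T_MP_s=G\oplus G^\perp$, observe that $K=TM^\perp$ sits inside $G^\perp$ as a Lagrangian subbundle with $G^\perp\cong K\oplus K^*$ symplectically via Proposition~\ref{prop:Lagrangian:symplectomorphism}, and conclude that $T_MP_s\cong G\oplus K\oplus K^*$ endowed with a symplectic structure that depends only on $(M,\omega)$ and the choice of $G$, not on the ambient $(P_s,\omega_s)$. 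The two models being isomorphic, neighbourhood equivalence follows.

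The main obstacle, and the step that will absorb most of the labour, is the Weinstein extension result underlying the uniqueness argument: showing that two symplectic forms on $P$ agreeing on $T_MP$ are locally conjugate by a diffeomorphism fixing $M$ pointwise. This is a Moser-type deformation where the relative Poincaré lemma (Lemma~\ref{relative_poincare}) produces a primitive $\phi$ with $\omega_1-\omega_0=d\phi$ and $\phi|_M=0$, after which one integrates via Lemma~\ref{lemma:integrable} the time-dependent vector field $Y_t$ defined by $i_{Y_t}\omega_t=-\phi$. Compared to the cosymplectic proof just carried out, the presymplectic case is genuinely easier: there is no auxiliary 1-form $\eta$ to reconcile simultaneously with $\omega$, so none of the delicate Reeb-vector-field interpolation used in Lemma~\ref{lemma:cosympl:ext:1} is needed here.
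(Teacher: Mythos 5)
Your proposal is correct and follows essentially the same route as the paper's Appendix~A: the same three-way split, the same kernel computation $\ker\flat_{j^*\omega}=T_xM\cap T_xM^\perp=T_xM^\perp$ for the direct part, the same model $\Omega_G=\pi^*\omega-\omega_{p_G}$ on a neighbourhood of the zero section of $K^*$ for existence, and the same $G\oplus K\oplus K^*$ normal form combined with Weinstein's extension theorem (via the relative Poincar\'e lemma and the Moser deformation $i_{Y_t}\Omega_t=-\phi$) for local uniqueness. Your closing observation that the presymplectic case avoids the Reeb-field interpolation of Lemma~\ref{lemma:cosympl:ext:1} is also consistent with how the paper presents the two arguments.
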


Theorem \ref{thm:coisotropic:global} can be divided into three smaller results, one corresponding to each statement of the theorem. We will denote these results as direct, existence and local uniqueness theorems respectively. We will provide proofs of these results for the sake of completeness.

\subsection{Direct theorem}

The direct statement can be found in \cite{zambon2014equivalences}. We present an alternative proof.

\begin{theorem}
    \textbf{(Direct Theorem)}
    Let $(P, \omega)$ be a symplectic manifold, and let $M$ be a coisotropic submanifold. The pull-back of $\omega$ to $M$ along the inclusion $j: M \xhookrightarrow{} P$ is a closed two-form of constant rank $2\textrm{dim} M - \textrm{dim}P$.
\end{theorem}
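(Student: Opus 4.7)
The plan is to mirror the cosymplectic direct theorem already proved in the main text, but with the simplification that there is no $1$-form to track. First I would observe that closedness of $j^*\omega$ is immediate from $d(j^*\omega)=j^*d\omega=0$, so the only real content is the rank computation, which must be carried out pointwise.

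Fix $x\in M$ and look at the flat map $\flat_{j^*\omega(x)}\colon T_xM\to (T_xM)^*$ given by $Y\mapsto i_Y j^*\omega(x)$. The rank of $j^*\omega$ at $x$ is $\dim T_xM-\dim\ker\flat_{j^*\omega(x)}$, so I need to identify this kernel. A vector $Y\in T_xM$ lies in the kernel iff $\omega(x)(Y,Z)=0$ for every $Z\in T_xM$, i.e.\ iff $Y\in T_xM\cap (T_xM)^\perp$, where $(T_xM)^\perp$ denotes the symplectic complement inside $(T_xP,\omega(x))$. The coisotropy hypothesis $(T_xM)^\perp\subset T_xM$ then collapses this intersection to $(T_xM)^\perp$ itself.

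Next I would invoke the standard dimension formula for a symplectic complement, $\dim (T_xM)^\perp=\dim T_xP-\dim T_xM$, which is cited in the main text from \cite{deLeon2011methods} and follows at once from non-degeneracy of $\omega(x)$. Substituting gives
\begin{equation*}
\mathrm{rank}\,j^*\omega(x)=\dim M-(\dim P-\dim M)=2\dim M-\dim P,
\end{equation*}
independent of $x$, which yields both constancy of the rank and the asserted value. There is no genuine obstacle here; the only thing to be careful about is not conflating the kernel of $\flat_{j^*\omega(x)}$ (computed within $T_xM$) with $\ker\omega(x)$ (computed within $T_xP$), which is why the coisotropy hypothesis must be used precisely to turn the ambient symplectic complement into the intrinsic kernel of the pulled-back form.
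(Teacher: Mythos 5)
Your proposal is correct and follows essentially the same route as the paper's proof: identify $\ker\flat_{j^*\omega(x)}$ with $T_xM\cap (T_xM)^\perp$, use coisotropy to reduce this to $(T_xM)^\perp$, and apply the dimension formula $\dim (T_xM)^\perp = \dim T_xP - \dim T_xM$. Your explicit remark distinguishing the kernel of the pulled-back flat map from the ambient kernel is a slightly more careful phrasing of the same step the paper performs implicitly.
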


\begin{proof}
Let $\omega_M = j^*\omega$. $\omega_M$ is a closed two-form as $d(j^*\omega) = j^*d\omega = 0$. Let $x$ be a point of $M$. As $P$ is symplectic, we have that ${\text{dim }T_xP = \text{dim }T_xM + \text{dim }(T_xM)^\perp}$. So, we have that ${\text{dim }(T_xM)^\perp = \text{dim }T_xP - \text{dim }T_xM}$. As $M$ is a coisotropic submanifold, we have that ${(T_xM)^\perp \subset T_xM}$, so ${(T_xM)^\perp = \text{ker }\flat_{\omega_M (x)}}$. Then,
\begin{flalign*}
    \text{rank }\omega_M(x) &= \text{dim (Im }\flat_{\omega_M (x)}) = \text{dim }T_xM - \text{dim (ker }\flat_{\omega_M(x)}) = 2 \text{dim }T_xM - \text{dim }T_xP \\
    &= 2 \text{dim }M - \text{dim }P. &&
\end{flalign*}
\end{proof}

\subsection{Existence theorem}

In 1982, Gotay \cite{gotay1982coisotropic} proved existence and uniqueness theorems for coisotropic embeddings of presymplectic manifolds. His work provides a complete local characterisation of coisotropic embeddings of presymplectic manifolds into symplectic manifolds. The proof below is also based on the ideas of Marle \cite{marle83sous}.

In order to prove the existence theorem, we will embbed $M$ into the zero-section of the dual of the characteristic bundle. The characteristic bundle $K$ is a vector bundle defined by the fibres $$K_x = \{X \in T_xM, \; i_X\omega = 0\}.$$ Then, $K^*$ is a vector bundle whose fibres are $$(K^*)_x = (K_x)^*,$$ the dual of the fibres of $K$.

\begin{theorem}
    \label{thm:sympl_cois:existence}
    \textbf{(Existence Theorem)} Let $(M, \omega)$ be a presymplectic manifold of dimension $2p + k$, where $\omega$ has constant rank $2p$. There exists a symplectic structure on a neighbourhood of the zero-section of $K^*$, where $K$ denotes the characteristic bundle of $(M, \omega)$. Moreover, $(M, \omega)$ may be coisotropically embedded in this neighbourhood as the zero-section.

\end{theorem}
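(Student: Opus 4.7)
The plan is to equip $K^*$ with a symplectic form extending $\omega$ by combining the pullback $\pi^*\omega$ along the projection $\pi:K^*\to M$ with a correction term coming from a generalised Liouville form. This is the classical construction of Marle: one picks a smooth complement to the characteristic distribution, uses it to transport a Liouville 1-form from $T^*M$ to $K^*$, and then verifies that the resulting closed 2-form is nondegenerate near the zero-section. The essential work lies in checking nondegeneracy and coisotropy.

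First I would invoke the presymplectic Darboux theorem to obtain local coordinates $(x^i,z^\mu)$ on $M$, $1\le i\le 2p$, $1\le\mu\le k$, in which $\omega = \sum_{i=1}^{p} dx^i\wedge dx^{p+i}$ and the vector fields $\partial/\partial z^\mu$ locally span $K=\ker\omega$. Using a partition of unity subordinate to a Darboux atlas (or a Riemannian metric on $M$), I would pick a smooth complement $G\subset TM$ so that $TM=K\oplus G$, let $p_G:TM\to K$ be the induced projection, and let $j_G:K^*\to T^*M$ be its fibrewise transpose $j_G(\alpha)=\alpha\circ p_G$. Pulling back the canonical Liouville form of $T^*M$ through $j_G$ yields a 1-form $\lambda_{p_G}$ on $K^*$, whose differential $\omega_{p_G}=d\lambda_{p_G}$ serves as the correction term. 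Then I would set
\[
\Omega_G \;=\; \pi^*\omega \;-\; \omega_{p_G},
\]
which is closed by construction, and observe that the zero-section inclusion $j_0:M\hookrightarrow K^*$ satisfies $j_0^*\Omega_G=\omega$ because $\omega_{p_G}$ pulls back to zero on the zero-section.

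The core verification is nondegeneracy, and this is where I expect the main obstacle to be. Extending the Darboux coordinates to $K^*$ by dual fibre coordinates $b_\mu$, I would write $\lambda_{p_G}$ and $\Omega_G$ explicitly and inspect them at zero-section points $\{b_\mu=0\}$: the dominant pieces are $\sum_{i=1}^{p} dx^i\wedge dx^{p+i}$ coming from $\pi^*\omega$ together with the base-fibre pairing $\sum_{\mu=1}^{k} dz^\mu\wedge db_\mu$ arising from $d\lambda_{p_G}$, while the only additional contributions couple $dx^i$ with $db_\mu$ through coefficients that depend on the choice of $G$. A direct block computation shows that this combination has maximal rank $2p+2k=\dim K^*$ along the zero-section, and since the locus of top-rank points of a smooth 2-form is open, $\Omega_G$ remains symplectic on some open tubular neighbourhood $W$ of the zero-section. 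The bookkeeping obstacle is making sure the $G$-dependent coefficients in $\lambda_{p_G}$ only produce terms multiplied by $b_\mu$ where they could otherwise spoil nondegeneracy, so that the residual cross-block between $x$-directions and $b$-directions cannot kill the canonical $(z,b)$-pairing.

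Finally, for coisotropy of the zero-section embedding I would work in the same coordinates: if $Y\in T_xK^*\setminus T_xM$ were $\Omega_G$-orthogonal to every vector of $T_xM$, then at least one of its fibre components $B_\mu\,\partial/\partial b_\mu$ would be nonzero, and plugging in $X=\partial/\partial z^\mu$ gives $\Omega_G(X,Y)=B_\mu\neq 0$, a contradiction. Hence $TM^{\perp_{\Omega_G}}\subset TM$ along the zero-section, so $j_0$ is a coisotropic embedding, completing the sketch.
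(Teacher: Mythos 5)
Your construction is exactly the one the paper uses: the same complement $G$ of $K=\ker\omega$, the same transported Liouville form $\lambda_{p_G}$ via the transpose $j_G$ of $p_G$, the same closed form $\Omega_G=\pi^*\omega \pm d\lambda_{p_G}$ checked for nondegeneracy in extended Darboux coordinates along the zero-section, and the identical contradiction argument (pairing a nonzero fibre component $B_\mu$ against $\partial/\partial z^\mu$) for coisotropy. The proposal is correct and matches the paper's proof in all essentials, differing only in sign conventions for $\omega_{p_G}$ and in making explicit a few routine points (existence of a smooth global complement, $j_0^*\Omega_G=\omega$) that the paper leaves implicit.
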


\begin{proof}
    Let $K$ be the kernel of $\omega$. Let $G$ be a complement of $K$. Then, $TM = K \oplus G$. Thus, $T_MK^*$ has the canonical decomposition $T_MK^* = TM \oplus K^* = G \oplus K \oplus K^*$. Let $p_G: TM \rightarrow K$ be the projection such that for $v = (v_k, v_g) \in TM$, with $v_k \in K$ and $v_g\in G$, then $p_G(v) = p_G((v_k, v_g)) = v_k$. Let $j_G$ be the adjoint of $p_G$, i.e. if $\alpha \in K^*, \; j_G(\alpha) = \alpha \circ p_G$.
    
    As $\omega$ is presymplectic of constant rank $2p$, it is possible to find local Darboux coordinates in $M$, $(x^i, z^\mu), \; i=1, ..., 2p,\; \mu=1, ..., k$ such that $\omega = \sum_{i=1}^p dx^i \wedge dx^{p+i}$. The coordinates $z^\mu$ are local coordinates along the fibres of $K$, as the vector fields $\partial/\partial z^\mu$ form a local basis of $\Gamma(K)$. So, for any element of $K^*$, there exist local coordinates $(x^i, z^\mu, b_\mu)$, such that $b_r(\frac{\partial}{\partial z^s}) = \delta_{r, s}$. Then, $\pi^*\omega = \sum_{i=1}^p dx^i\wedge dx^{p + i}$, where $\pi$ is the canonical projection $\pi: K^* \rightarrow M$.

    %Let $j: K^* \rightarrow T^*M$ be the inclusion. As $\omega$ is presymplectic, it is possible to find local Darboux coordinates in $M$, $(x^i, p_i, z^\mu)$ such that $\omega = dx^i \wedge dp_i$. The coordinates $z^\mu$ are local coordinates along the fibers of $K$, as the vector fields $\partial/\partial z^\mu$ form a local basis of $\Gamma(K)$. So, for any element of $K^*$, there exist local coordinates $(x^i, p_i, z^\mu, b_\mu)$, such that $\pi^*\omega = dx^i\wedge dp_i$.

    %$K$ can be expressed as $(m^i, k^i)$, where $\omega(k^i) = 0$. So, an element of $K^*$ can be locally expressed as $(m^i, (k^*)^i)$. An element of $T^*M$ can be locally expressed as $(m^i, (k^*)^i, (g^*)^i)$, where $(m^i, (k^*)^i) \in K^*$ and $(m^i, (g^*)^i) \in G^*$ ($T^*M = K^* \oplus G^*$). So, locally $j$ is given by $j(m^i, (k^*)^i) = (m^i, (k^*)^i, 0)$. En el paper comentan que $j$ depende de la elección de $G$, pero yo no veo porqué, ya que aunque las coordenadas de $T^*M$ sean distintas, $(m^i, (k^*)^i, 0)$ será siempre el mismo punto.

    %$T^*M$ has a symplectic canonical form $\omega_{T^*M}$. If the coordinates of $T^*M$ are $(m^i, r_i)$. Then, the canonical form  $\omega_{T^*M} = dm^i \wedge dr_i$ locally. Let $\Omega = \pi^*\omega + j^*\omega_{T^*M}$. In this coordinates, $\Omega = dx^i \wedge dp_i + j^*\omega_{T^*M} = dx^i \wedge dp_i + dz^\mu \wedge db_\mu$. Esto lo veo bien, pero no sé demostrarlo "muy formal".

    Any vector $X \in TK^*$ can be expressed in the form $$X = \sum_{i=1}^{2p} X^i \frac{\partial}{\partial x^i} + \sum_{r=1}^k Z^r \frac{\partial}{\partial z^r} + \sum_{s=1}^k B_s \frac{\partial}{\partial b_s},$$ where $X^i, Z^r, B_s$ are functions of the local coordinates $(x^i, z^\mu, b_\mu)$. We have that

    \begin{align*}
        T\pi(X) &= \sum_{i=1}^{2p} X^i \frac{\partial}{\partial x^i} + \sum_{r=1}^k Z^r \frac{\partial}{\partial z^r},\\
        p_G \circ T\pi(X) &= \sum_{r=1}^k \Big(zZ^r + \sum_{i=1}^{2p} A^r_i X^i \Big)\frac{\partial}{\partial z^r},
    \end{align*}

    \noindent where the $A^r_i$ depend on $(x^i, z^\mu)$ and the projection $p_G$. Let $\lambda_{p_G}$ be the generalised Liouville form \cite{marle83sous}. Then, $$\lambda_{p_G}(X) = \sum_{r=1}^k B_r\Big(Z^r + \sum_{i=1}^{2p} A^r_i X^i\Big).$$ 
    We can express the generalised Liouville 1-form as 
    $$
    \lambda_{p_G} = \sum_{r=1}^k b_r \Big(dz^r + \sum_{i=1}^{2p} A^r_i dx^i\Big).
    $$

    Let $\Omega_G = \pi^*\omega - \omega_{p_G}$ a 2-form on $K^*$. Then, $\Omega_G$ is locally given by

    \begin{align}
        \label{sympl_cois:existence:omega}
        \Omega_G =& \sum_{i=1}^p dx^i\wedge dx^{p + i} + \sum_{r = 1}^k db_r \wedge \Big(dz^r + \sum_{i=1}^{2p} A^r_i dx^i\Big) \\
        &+ \sum_{r = 1}^k b_r\Big[ \sum_{i=1}^{2p} \Big( \sum_{j=1}^{2p} \frac{\partial A^r_i}{\partial x^j} dx^j +\sum_{s=1}^k \frac{\partial A^r_i}{\partial z^s} dz^s \Big) \wedge dx^i\Big] \nonumber
    \end{align}
    
    $\Omega_G$ is smooth since it is a composition of smooth forms and it is well-defined. Furthermore, $\Omega_G$ is closed:
    $$d\Omega_G = d(\pi^*\omega + d\lambda_{p_G}) = \pi^*(d\omega) + dd\lambda_{p_G} = 0.$$ 
    Along the zero section of $K^*$, $b_r = 0$ and we can see that $\Omega_G$ is non-degenerate. So, there exists an neighbourhood of the zero section of $K^*$ such that $\Omega_G$ is non-degenerate. 
    
   % We will see that $\Omega$ is symplectic in a neighbourhood of the zero-section of $K^*$. Let $m \in M$ be a point of the zero-section of $K^*$. Then, there exists a neighbourhood of $m$ with local coordinates $(x^i, p_i, z^\mu, b_\mu)$ such that $\Omega = dx^i \wedge dp_i + dz^\mu \wedge db_\mu$. This is the local canonical symplectic form for coordinates $(x^i, z^\mu, p_i, b_\mu)$, so it is symplectic in the neighbourhood of $m$. We can do this for every $m \in M$. If we consider the union of all the neighbourhoods we get a tubular neighbourhood such that $(P, \Omega)$ is symplectic.

    %(Otra forma de verlo sin usar Darboux sería considerando $i_X\Omega$ y sustituyendo por un $Y$ que esté en "$K \oplus K^*$", básicamente las coordenadas $z^\mu, b_\mu$, y sustituyendo por un $Y$ que está en $G$).

    To see that $M$ is a coisotropic submanifold of $K^*$, we need to show that $TM^\perp \subset TM$ in $T_MK^*$. Assume that this is not the case. Then, there exists $x \in M$, $Y \in T_xK^*, Y \not\in T_xM$, such that $\Omega_G(X,Y) = 0, \forall X \in TM$. If we use Darboux local coordinates, then we get that there is some $b_\mu \neq 0$. Then, if we take $X = \partial/\partial z^\mu$, we would have $\Omega_G(X,Y) = 1$. 

    %For each point $(m, 0)$ of the zero-section of $K^*$, we can find a neighbourhood with local coordinates $(m^i, (k^*)^i)$. If $X \in T_mK^* \subset TT^*M$, then we can find a basis such that $X = a_ik^i + b_ig^i + c_ip^i$, con $k^i \in T_mK$, $g^i \in T_mG$.
    
    %In this neighbourhood, $i_X\Omega = 0$ iff $\forall Y \in TK^*, \omega(\pi(X), \pi(Y)) + \omega_{T^*M}(j(X), j(Y)) = 0$. Let's take $Y = a_i'k^i + c_i'p^i$. Then,  $\omega(\pi(X), \pi(Y)) = 0$. Then, $i_X\Omega = \omega_{T^*M}(j(X), j(Y)) = 0$ iff $j(X) \in <Y>^\perp$. But $Y$ can be any element of .

\end{proof}

\subsection{Local uniqueness theorem}

In order to prove the local uniqueness theorem, some previous results from Weinstein \cite{weinstein71symplectic}, \cite{weinstein1977lectures} are needed. It is important to study these results with detail, as they have some key ideas that will help us generalise the results to cosymplectic manifolds. We will make use of a generalisation of the Poincaré lemma.

\begin{lem}
    \label{relative_poincare}
    \textbf{(Relative Poincaré Lemma)} Let $P$ be a manifold and $M$ a closed submanifold of $P$. Let $\Omega$ be a closed $k$-form on $P$ whose pullback to $M$ is zero. Then there is a $k-1$-form $\phi$ on a neighbourhood of $M$ in $P$ such that $d\phi = \Omega$ and $\phi$ vanishes on $M$. If $\Omega$ vanishes on $N$, then $\phi$ can be chosen so that its first partial derivatices vanish on $M$.   
\end{lem}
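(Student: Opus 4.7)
The plan is to carry out the standard deformation-retraction argument with Cartan's magic formula, adapted to the relative setting. First I would exploit the hypothesis that $M$ is a closed submanifold to produce a tubular neighbourhood: there is a diffeomorphism between an open neighbourhood $U$ of $M$ in $P$ and an open neighbourhood of the zero section of the normal bundle $\nu(M) \to M$. Transporting fiberwise scalar multiplication by $t \in [0,1]$ back to $U$, I obtain a smooth family of maps $\pi_t : U \to U$ satisfying $\pi_1 = \mathrm{id}_U$, $\pi_t|_M = \mathrm{id}_M$ for every $t$, and $\pi_0 = \iota_M \circ r$ where $r : U \to M$ is the retraction. Shrinking $U$ if necessary, I may assume $\pi_t(U) \subset U$ throughout.

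Next, let $X_t$ be the time-dependent vector field generating this homotopy, characterised by $\frac{d}{ds}\pi_s\big|_{s=t} = X_t \circ \pi_t$, and note that $X_t$ vanishes along $M$. Since $d\Omega = 0$, Cartan's magic formula gives
$$\frac{d}{dt}\pi_t^*\Omega \;=\; \pi_t^*\mathcal{L}_{X_t}\Omega \;=\; d\bigl(\pi_t^*(i_{X_t}\Omega)\bigr).$$
Integrating from $0$ to $1$, and using $\pi_1^*\Omega = \Omega$ together with $\pi_0^*\Omega = r^*\iota_M^*\Omega = 0$ by the hypothesis, I conclude
$$\Omega \;=\; d\phi, \qquad \phi \;:=\; \int_0^1 \pi_t^*\bigl(i_{X_t}\Omega\bigr)\,dt,$$
which is the desired $(k-1)$-form primitive of $\Omega$.

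Finally I would verify the boundary behaviour. Along $M$ each $\pi_t$ is the identity and $X_t \equiv 0$, so the integrand vanishes pointwise on $M$ and hence $\phi|_M = 0$. For the refinement about first derivatives, one inspects $\phi$ in normal coordinates: if $\Omega$ has higher-order vanishing along $M$, then $i_{X_t}\Omega$ inherits an extra factor that vanishes on $M$, and this propagates through the integral to give the desired vanishing of the first-order jet of $\phi$ along $M$. The main technical obstacle is checking smoothness of the integrand as $t \to 0$: the vector field $X_t$ is typically singular there (fiberwise, $X_t$ is essentially $v/t$ in normal coordinates), and one must verify in local coordinates that the hypothesis $\iota_M^*\Omega = 0$ produces a compensating factor of $t$ in $\pi_t^*(i_{X_t}\Omega)$ so that the integrand extends smoothly to $t=0$ and the integral produces a genuine smooth form on $U$. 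Once this cancellation is established, the rest of the argument is essentially mechanical.
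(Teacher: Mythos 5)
Your proposal is correct and follows the same homotopy-operator argument as the paper: a tubular neighbourhood identified with the normal bundle, the fibrewise scalings $\pi_t$, the vector field $X_t$, Cartan's formula, and integration from $0$ to $1$ to obtain $\phi = \int_0^1 \pi_t^*(i_{X_t}\Omega)\,dt$. Your additional remarks on the vanishing of $\phi$ along $M$ and on the smoothness of the integrand as $t\to 0$ are sound and in fact more careful than the paper's sketch, which defers these points to Weinstein's original article.
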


\begin{proof}
    We include an sketch of the proof. The full version is available in \cite{weinstein71symplectic}. 
    
    The proof of the relative Poincaré lemma uses the homotopy operator associated with a deformation of a tubular neighbourhood of $M$. We may assume that the neighbourhood $U$ of $P$ is a vector bundle whose zero-section is $M$. We define $\pi_t: U \rightarrow U$ as the multiplication by $t$, for $t \in [0, 1]$. We define the vector field $X_t$ as ($\frac{d\pi_s}{ds})_{s=t}$. Then, if $\Omega$ is a closed p-form on $U$ which is zero on $M$, we deduce
    
    \begin{equation}
        \frac{d}{ds}(\pi_s^*\Omega)_{s=t} = \pi_t^*(\flat_\Omega(X_t)) + d[\pi_t^*(\flat_\Omega(X_t)] = d[\pi_t^*(\flat_\Omega(X_t)].
    \end{equation}
    
    If we integrate with respect to $t$ on $[0,1]$ and use that $\pi_0^*\Omega = 0$ and $\pi_1$ is the identity, we get that
    
    \begin{equation}
        \Omega = \int_0^1 d(\pi_t^*(\flat_\Omega(X_t))) dt.
    \end{equation}
    
    If we define $\phi = \int_0^1 \pi_t^*(\flat_\Omega(X_t)) dt$, then $\Omega = d\phi$.
    
\end{proof}

To prove the local uniqueness theorem, we need to build symplectomorphisms between tubular neighbourhoods. In order to construct these diffeomorphisms, we will make use of the flow defined by vector fields. A time-dependent vector field $\{Y_t\}_{t\in I}$ is \textit{integrable} if there exists a one-parameter family $\{f_t\}_{t\in I}$ of diffeomorphisms such that $f_0$ is the identity and $\frac{df_t}{dt} = Y_t \circ f_t$. Then,

\begin{lem}
    \label{lemma:integrable}
    Let $P$ be a manifold and $M$ a closed submanifold of $P$. Let $\{Y_t\}_{t\in I}$ be a time-dependent vector field on a neighbourhood of $M$ in $P$. If $Y_t|_M = 0$, then $\{Y_t\}_{t\in I}$ is integrable.
\end{lem}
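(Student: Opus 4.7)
The plan is to reduce existence of the time-dependent flow to a standard ODE argument, exploiting the hypothesis $Y_t|_M = 0$ in a crucial way. First I would pass to the suspended autonomous vector field $\widetilde{Y} = \partial/\partial t + Y_t$ on $P \times I$, whose integral curves project to solutions of $\dot{\gamma}(s) = Y_s(\gamma(s))$. The key observation is that, because $Y_t$ vanishes along $M$ for every $t$, each point $x \in M$ is an equilibrium: the constant curve $s \mapsto x$ is already an integral curve of $Y_s$ through $x$. Hence the maximal existence time of the integral curve starting at any $x \in M$ covers all of $I$, and the sole remaining issue is to propagate this into a full neighbourhood of $M$.

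Next I would invoke continuous dependence on initial data. For each $x \in M$ and each compact subinterval $J \subset I$ containing $0$, the local flow-box theorem for the suspended system produces an open neighbourhood $U_x \subset P$ of $x$ and a smooth map $(t,y) \mapsto f_t(y)$ defined on $J \times U_x$ satisfying $f_0 = \mathrm{id}$ and $\partial_t f_t(y) = Y_t(f_t(y))$. Uniqueness of solutions of the ODE guarantees that any two such locally defined flows agree on the intersection of their domains. Taking $U = \bigcup_{x \in M} U_x$ then yields an open neighbourhood of $M$ on which a smooth one-parameter family $\{f_t\}_{t \in I}$ is well-defined; each $f_t$ is a local diffeomorphism because it admits the local inverse supplied by the analogous argument applied to the reverse-time system.

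The main obstacle I expect is the possible non-compactness of $M$: because $M$ is only assumed closed, one cannot extract a uniform radius for the neighbourhoods $U_x$, so the resulting domain $U$ may shrink arbitrarily as one moves off to infinity along $M$. This is, however, consistent with what the lemma asserts, which is merely the existence of some neighbourhood of $M$ on which the flow lives. The remaining technical point, joint smoothness of $f_t$ in $t$ and in the base point, is handled automatically by the smooth dependence on parameters theorem for the autonomous system $\widetilde{Y}$ on $P \times I$, so no additional argument is needed beyond the gluing step above.
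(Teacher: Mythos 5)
Your argument is correct and is exactly the one the paper sketches: the hypothesis $Y_t|_M = 0$ makes every point of $M$ an equilibrium, so the (suspended) flow's domain of definition, being open, contains $I\times M$ and hence $I\times U$ for some neighbourhood $U$ of $M$, with uniqueness of solutions providing the gluing and the diffeomorphism property. This matches the paper's stated reliance on the openness of the domain of definition of ODE solutions, so no further comparison is needed.
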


The proof is based on the openness of the domain of definition of the solutions of a system of ordinary differential equations. The following theorem, which is a generalisation of the Darboux theorem, will enable us to find symplectomorphisms over symplectic structures on the same manifold \cite{weinstein71symplectic}:

\begin{theorem}
    \label{sympl:extension:same}
    Let $P$ be a manifold and $M$ a closed submanifold. Let $\Omega_0$ and $\Omega_1$ be symplectic structures over a neighbourhood $U$ of $M$ in $P$ such that $\Omega_0 |_M = \Omega_1 |_M$. Then, there exists a symplectomorphism from $f:U \rightarrow U$ such that $f |_M$ is the identity. 
\end{theorem}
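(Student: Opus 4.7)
The plan is to apply the Moser deformation method along the linear path $\Omega_t = (1-t)\Omega_0 + t\Omega_1 = \Omega_0 + t\omega$, where $\omega = \Omega_1 - \Omega_0$, and to construct a time-dependent vector field $Y_t$ whose flow $f_t$ satisfies $f_t^*\Omega_t = \Omega_0$. Taking $f := f_1$ will then give the required symplectomorphism fixing $M$ pointwise.

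First I would check the hypotheses for Moser's argument. Since $\Omega_0$ and $\Omega_1$ are closed, so is $\omega$; since they agree on $M$, we have $\omega|_M = 0$. Hence each $\Omega_t$ is closed and coincides with $\Omega_0$ on $M$, so it is non-degenerate at every point of $M$. By continuity of the determinant and compactness of $[0,1]$, after shrinking $U$ we may assume $\Omega_t$ is non-degenerate on the whole of $U$ for every $t \in [0,1]$. Next, the relative Poincar\'e lemma (Lemma \ref{relative_poincare}) applied to $\omega$ yields a $1$-form $\phi$ on a neighbourhood of $M$ with $d\phi = \omega$ and $\phi|_M = 0$.

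Using the non-degeneracy of $\Omega_t$, I can define a time-dependent vector field $Y_t$ by $i_{Y_t}\Omega_t = -\phi$. Because $\phi$ vanishes on $M$, so does $Y_t$, and Lemma \ref{lemma:integrable} then yields a one-parameter family $\{f_t\}_{t\in[0,1]}$ of diffeomorphisms of a neighbourhood of $M$ with $f_0 = \mathrm{id}$ and $f_t|_M = \mathrm{id}$ for every $t$. The Moser computation, using Cartan's formula and $d\Omega_t = 0$, then reads
\[
\frac{d}{ds}(f_s^*\Omega_s)\Big|_{s=t} = f_t^*\!\left(\mathcal{L}_{Y_t}\Omega_t + \tfrac{d\Omega_s}{ds}\Big|_{s=t}\right) = f_t^*\bigl(d\,i_{Y_t}\Omega_t + \omega\bigr) = f_t^*(-d\phi + \omega) = 0.
\]
Integrating from $0$ to $1$ gives $f_1^*\Omega_1 = f_0^*\Omega_0 = \Omega_0$, so $f := f_1$ is a symplectomorphism between neighbourhoods of $M$ that restricts to the identity on $M$.

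The main obstacle I anticipate is purely one of book-keeping of neighbourhoods rather than a conceptual one: the relative Poincar\'e lemma, the uniform non-degeneracy of $\Omega_t$, and the existence of the flow $f_t$ for all $t \in [0,1]$ each hold only on \emph{some} neighbourhood of $M$, which in principle could depend on $t$. One must shrink $U$ a finite number of times and invoke compactness of $[0,1]$ so that a single neighbourhood supports every step of the argument uniformly in $t$. Once this is handled, everything else is a direct verification.
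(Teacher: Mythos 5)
Your proposal is correct and follows essentially the same route as the paper's proof: the Moser deformation along $\Omega_t = \Omega_0 + t\omega$, the relative Poincar\'e lemma to produce $\phi$ with $d\phi = \omega$ and $\phi|_M = 0$, the vector field $Y_t = \flat_{\Omega_t}^{-1}(-\phi)$ integrated via Lemma \ref{lemma:integrable}, and the same computation showing $f_1^*\Omega_1 = \Omega_0$. Your closing remark on shrinking neighbourhoods uniformly in $t$ is a sensible point of care that the paper treats implicitly.
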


\begin{proof}
    Let $\omega = \Omega_1 - \Omega_0$ and let $\Omega_t = \Omega_0 + t\omega$ for $t$ between 0 and 1, and, since $\Omega_0, \Omega_1$ are closed, we have $d\omega = 0$ and $\omega |_M = 0$. In addition, $d\Omega_t = 0$ and $\Omega_t |_M = \Omega_0 |_M$ and $\Omega_t$ is non-degenerate over $M$; so, there is a neighbourhood of $M$ in $P$ such that $\Omega_t$ is symplectic. We can use the relative Poincar\'e's lemma to get a 1-form $\phi$ such that $\omega = d\phi$ and $\phi|_M = 0$. Let $Y_t = \flat_{\Omega_t}^{-1}(- \phi)$. $\flat_{\Omega_t}(X)$ is defined as $i_X\Omega_t$ and it is an isomorphism of $T_xP$ and $T_x^*P$ (\cite{deLeon2011methods}). Since $\phi |_M = 0$, then $Y_t|_M = 0$ and the time-dependent vector field $\{Y_t\}_{t \in [0,1]}$ is integrable to a one-parameter family $\{f_t\}_{t \in [0,1]}$ of diffeomorphisms of neighbourhoods of $M$ in $P$ (the result can be found in Lemma 3.1 of \cite{weinstein71symplectic}). Then, we have that

    \begin{align*}
        \frac{d}{ds}(f_s^*\Omega_s)_{s=t} &= f_t^*\Big( \frac{d\Omega_s}{ds} \Big)_{s=t} + f_t^*(d(i_{Y_t}\Omega_t)) \\
        &= f_t^*(\omega + d(-\phi)) \\
        &= 0.
    \end{align*}

Let $f = f_1$. Then $f^*\Omega_1 = f_1^*\Omega_1 = f_0^*\Omega_0 = \Omega_0$.

\end{proof}

We can extend this theorem to construct symplectomorphisms between tubular neighbourhoods of different manifolds. The result is called Weinstein's extension theorem \cite{weinstein1977lectures}:

\begin{theorem}
    \label{thm:sympl:ext}
    \textbf{(Extension theorem)} Let $P$ be a manifold and let $M$ be a closed submanifold of $P$.
    \begin{itemize}
        \item If $(T_MP, \Omega)$ is a symplectic vector bundle such that $\Omega |_{TM}$ is closed, then $\Omega$ extends to a symplectic structure on a neighbourhood of $M$ in $P$.
        \item Let $\Omega_0$ and $\Omega_1$ be symplectic forms on $P_0$ and $P_1$ respectively, and let $M$ be embedded in both $P_0$ and $P_1$. Then, if $(T_MP_0, \Omega_0)$ and $(T_MP_1, \Omega_1)$ are isomorphic as symplectic vector bundles, there are neighbourhoods $U_0, U_1$ of $M$ in $P_0, P_1$ respectively and a symplectomorphism $\psi: (U_0, \Omega_0) \rightarrow (U_1, \Omega_1)$ such that $\psi |_M$ is the identity. 
    \end{itemize}
\end{theorem}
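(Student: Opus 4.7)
The plan is to reduce this theorem to the extension result established just above (the theorem obtained by combining Lemmas~\ref{lemma:cosympl:ext:1} and~\ref{lemma:cosympl:ext:2}), which handles the case where the two cosymplectic structures already agree on $T_MP$. The task is therefore to build a diffeomorphism of tubular neighbourhoods whose differential along $M$ realises the hypothesised cosymplectic vector bundle isomorphism, and then to pull back and apply the established result.

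First, I would unpack the hypothesis. Let $\Phi_0:T_MP_0\to T_MP_1$ be the given isomorphism of cosymplectic vector bundles; as in the symplectic statement of Theorem~\ref{thm:sympl:ext}, I read this as $\Phi_0$ covering $\mathrm{id}_M$ and restricting to the identity on $TM\subset T_MP_i$ (if necessary by composing with a bundle automorphism, which only alters the normal direction). Fixing complementary normal bundles $\nu_i\subset T_MP_i$, the restriction of $\Phi_0$ is then an isomorphism $\nu_0\to\nu_1$ over $\mathrm{id}_M$. The tubular neighbourhood theorem supplies open neighbourhoods $U_i$ of $M$ in $P_i$ and a diffeomorphism $\Phi:U_0\to U_1$ with $\Phi|_M=\mathrm{id}_M$ and $T\Phi|_{T_MP_0}=\Phi_0$.

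Next, I pull back the target structure along $\Phi$. Define $\tilde\Omega_1=\Phi^*\Omega_1$ and $\tilde\eta_1=\Phi^*\eta_1$ on $U_0$. Pullbacks by diffeomorphisms preserve the closedness and rank conditions, so $(\tilde\Omega_1,\tilde\eta_1)$ is cosymplectic on $U_0$. Because $T\Phi|_{T_MP_0}=\Phi_0$ and $\Phi_0$ intertwines the cosymplectic structures on the tangent restrictions,
\[
\tilde\Omega_1|_{T_MP_0}=\Omega_0|_{T_MP_0}, \qquad \tilde\eta_1|_{T_MP_0}=\eta_0|_{T_MP_0}.
\]
The theorem deduced from Lemmas~\ref{lemma:cosympl:ext:1} and~\ref{lemma:cosympl:ext:2} is now applicable to $(\Omega_0,\eta_0)$ and $(\tilde\Omega_1,\tilde\eta_1)$ on $U_0$, yielding neighbourhoods $W_0,W_1\subset U_0$ of $M$ and a cosymplectomorphism $f:(W_0,\Omega_0,\eta_0)\to(W_1,\tilde\Omega_1,\tilde\eta_1)$ with $f|_M=\mathrm{id}_M$. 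Then $\psi:=\Phi\circ f$ is a cosymplectomorphism from $(W_0,\Omega_0,\eta_0)$ onto $(\Phi(W_1),\Omega_1,\eta_1)\subset P_1$ with $\psi|_M=\mathrm{id}_M$, as required.

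The main obstacle I expect lies in the first step: producing the tubular-neighbourhood diffeomorphism $\Phi$ whose differential along $M$ is exactly $\Phi_0$, and not merely conjugate to it by an automorphism. This is a standard but delicate use of the tubular neighbourhood theorem and boils down to choosing compatible exponential maps (or trivialisations) on $P_0$ and $P_1$ so that the canonical splittings $T_MP_i\cong TM\oplus\nu_i$ align with $\Phi_0$. Once that identification is fixed, all the dynamical work is already done by the previous lemmas, and the argument is parallel step-by-step to the proof of the symplectic case in Theorem~\ref{thm:sympl:ext}.
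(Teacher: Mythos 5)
Your argument is essentially the paper's own proof: transfer $\Omega_1$ to a neighbourhood of $M$ in $P_0$ by a tubular-neighbourhood diffeomorphism realising the given bundle isomorphism, apply the ``two structures agreeing along $M$'' result to $\Omega_0$ and the pullback, and compose. The only correction needed is that, since this is the symplectic statement, the intermediate step should invoke Theorem~\ref{sympl:extension:same} rather than the cosymplectic Lemmas~\ref{lemma:cosympl:ext:1} and~\ref{lemma:cosympl:ext:2} (those are the analogues used for Theorem~\ref{thm:cosympl:ext}); with that substitution your proof matches the paper's step for step, and the subtlety you flag about arranging $T\Phi|_{T_MP_0}$ to equal the bundle isomorphism is indeed the one point the paper also passes over lightly.
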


\begin{proof}
    To prove the local uniqueness theorem we will only use the second part of the extension theorem. 
    
    Let $V_0$, $V_1$ be tubular neighbourhoods of $P_0$ and $P_1$ respectively. Let $\alpha_0, \alpha_1$ be the respective diffeomorphisms to neighbourhoods of the zero section of $T_MP_0$ and $T_MP_1$, and let $\varphi$ be the isomorphism between the two tangent bundles $T_MP_0$ and $T_MP_1$. If $V_1$ is sufficiently small, then $\phi = \alpha_0^{-1} \circ \varphi \circ \alpha_1$ is a diffeomorphism between $V_1$ and a neighbourhood $\tilde{V}_1$ of $M$ in $P_0$. Then, $(\tilde{V}_1, \phi^*\Omega_1)$ is a neighbourhood of $M$ in $P_0$. Let $U_0 = V_0 \cap \tilde{V}_1$. By theorem \ref{sympl:extension:same}, there exists a symplectomorphism $f: (U_0, \Omega_0) \rightarrow (U_0, \phi^*\Omega_1)$ such that $f|_M$ is the identity. Then, $f \circ \phi^{-1}$ is a symplectomorphism between $(U, \Omega_0$) and some neighbourhood $(U_1, \Omega_1)$ of $M$ in $P_1$.
    
\end{proof}

\begin{prop}
    \label{prop:Lagrangian:symplectomorphism}
    Let $(E, \Omega)$ be a symplectic vector bundle over $M$, and let $L \subset E$ be a Lagrangian subbundle. Then, there is a symplectomorphism $\varphi: (E, \Omega) \rightarrow (L \oplus L^*, \omega_L)$, such that $\varphi|_L$ is the identity. $\omega_L$ is the canonical symplectic form on $L \oplus L^*$, i.e. $\omega_L((l \oplus l^*), (m\oplus m^*)) = m^*(l) - l^*(m)$.
\end{prop}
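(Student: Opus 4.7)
The plan is to construct the symplectomorphism by first producing a Lagrangian complement of $L$ in $E$ and then using $\Omega$ itself to identify that complement with $L^*$. Concretely, I would proceed as follows.

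First I would establish the existence of a Lagrangian subbundle $L' \subset E$ such that $E = L \oplus L'$. The quickest way is to equip $E$ with a fibrewise inner product $g$ compatible with $\Omega$ (such a $g$ exists by a standard partition-of-unity argument on $M$), which determines an almost complex structure $J$ on $E$ with $\Omega(\cdot,J\cdot) = g(\cdot,\cdot)$. Then $L' := JL$ is a smooth subbundle, and a direct check using $J^2 = -\mathrm{id}$ and the compatibility of $g$ and $\Omega$ shows that $L'$ is Lagrangian and transverse to $L$. Alternatively, starting from any complementary subbundle $F$ one can correct $F$ using $\Omega|_F$ to make it Lagrangian, since the space of Lagrangian complements to $L$ is an affine space over $\mathrm{Sym}^2(L^*)$.

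Next I would define a bundle map $\tilde\varphi : L' \to L^*$ by
\begin{equation*}
\tilde\varphi(l')(l) := \Omega(l',l), \qquad l' \in L',\ l \in L.
\end{equation*}
Because $L$ is Lagrangian we have $L^\perp = L$, so no nonzero $l' \in L'$ can annihilate all of $L$ under $\Omega$; hence $\tilde\varphi$ is fibrewise injective. A dimension count ($\dim L' = \dim E - \dim L = \dim L = \dim L^*$) then forces $\tilde\varphi$ to be a smooth vector bundle isomorphism. I would then set
\begin{equation*}
\varphi : E = L \oplus L' \longrightarrow L \oplus L^*, \qquad \varphi(l + l') := l + \tilde\varphi(l'),
\end{equation*}
which is manifestly a vector bundle isomorphism restricting to the identity on $L$.

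The final step is to verify $\varphi^*\omega_L = \Omega$. For $u = l_1 + l_1'$ and $v = l_2 + l_2'$ in $E$, isotropy of both $L$ and $L'$ kills the diagonal terms, leaving
\begin{equation*}
\Omega(u,v) = \Omega(l_1',l_2) + \Omega(l_1,l_2') = \tilde\varphi(l_1')(l_2) - \tilde\varphi(l_2')(l_1),
\end{equation*}
which is precisely $\omega_L\bigl(\varphi(u),\varphi(v)\bigr)$ by the definition of the canonical form on $L \oplus L^*$. The main obstacle is the first step: producing a \emph{smooth} Lagrangian complement globally over $M$, rather than merely fibrewise. Once that is in hand, the remainder is linear algebra carried out fibrewise and therefore automatically smooth.
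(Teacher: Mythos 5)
Your proposal is correct in architecture and, unlike the paper, actually proves the statement: the paper gives no argument for this proposition and simply defers to the reference \cite{Abraham1967Foundations}. The route you take --- a fibrewise compatible metric $g$ obtained by a partition of unity, the associated almost complex structure $J$ (polar decomposition of the operator $A$ defined by $\Omega(u,v)=g(Au,v)$), the Lagrangian complement $L'=JL$, and the contraction of $\Omega$ against $L$ to identify $L'$ with $L^*$ --- is the standard one, and it correctly addresses the only genuinely non-linear-algebraic point, namely the existence of a \emph{smooth global} Lagrangian complement. The injectivity argument ($\tilde\varphi(l')=0$ forces $l'\in L^{\perp}=L$, hence $l'\in L\cap L'=\{0\}$) and the dimension count are fine, and everything is fibrewise-linear in smoothly varying data, so smoothness of $\varphi$ is automatic.

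One sign needs correcting in the final verification. With your definition $\tilde\varphi(l')(l)=\Omega(l',l)$ and the paper's convention $\omega_L\bigl((l\oplus l^*),(m\oplus m^*)\bigr)=m^*(l)-l^*(m)$, writing $u=l_1+l_1'$ and $v=l_2+l_2'$ one gets
\begin{equation*}
\omega_L\bigl(\varphi(u),\varphi(v)\bigr)=\tilde\varphi(l_2')(l_1)-\tilde\varphi(l_1')(l_2)=\Omega(l_2',l_1)-\Omega(l_1',l_2)=-\Omega(u,v),
\end{equation*}
so the map you wrote down is an \emph{anti}-symplectomorphism. The remedy is immediate: set $\tilde\varphi(l')(l):=\Omega(l,l')$ instead (equivalently, compose your $\varphi$ with $-\mathrm{id}$ on the $L^*$ summand). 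Injectivity, the dimension count, and the fact that $\varphi|_L$ is the identity are unaffected, and the computation then closes as
\begin{equation*}
\Omega(u,v)=\Omega(l_1,l_2')+\Omega(l_1',l_2)=\tilde\varphi(l_2')(l_1)-\tilde\varphi(l_1')(l_2)=\omega_L\bigl(\varphi(u),\varphi(v)\bigr),
\end{equation*}
as required.
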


The canonical symplectic form $\omega_L$ is analogous to the canonical form defined by the Liouville form. The proof of this result can be found in \cite{Abraham1967Foundations}.

With these tools, we can now prove the local uniqueness theorem. The proof is based on the work of Gotay \cite{gotay1982coisotropic}.

\begin{theorem}
    \textbf{(Local Uniqueness Theorem)} All coisotropic embeddings of $(M, \omega)$ are neighbourhood equivalent.
\end{theorem}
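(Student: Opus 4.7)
The plan is to adapt the strategy of the presymplectic local uniqueness proof in the Appendix to the cosymplectic case, producing for any coisotropic embedding of $(M,\Omega,\eta)$ a model cosymplectic vector bundle over $M$ that depends only on the intrinsic data $(M,\Omega,\eta)$ and an auxiliary choice of splitting. Once two embeddings yield isomorphic model bundles, the cosymplectic extension theorem (Theorem \ref{thm:cosympl:ext}) will upgrade the bundle isomorphism to a cosymplectomorphism of neighbourhoods.

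First I would fix the algebraic data. Given a coisotropic embedding $j:(M,\Omega,\eta) \hookrightarrow (P,\bar{\Omega},\bar{\eta})$, set $K = TM^\perp$ and, following the existence theorem \ref{thm:cosympl_cois:existence}, choose a complement $G$ of $K$ in $TM$ that contains the Reeb vector field $\xi$ at every point. This is possible because $\eta(\xi)=1$ while $K \subset \ker\eta$, so $\xi \notin K$. I would then prove that $T_MP = G \oplus G^\perp$. The computation goes exactly as in the author's sketch: since $G^\perp \subset \ker\bar\eta$ by the very definition of the cosymplectic orthogonal, the intersection $G \cap G^\perp$ reduces, via the identity $K = K^\perp \cap G^\perp$, to $K \cap G = 0$, and a dimension count then yields the direct sum.

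Next I would identify $K$ as a Lagrangian subbundle of the symplectic vector bundle $(G^\perp, \bar{\Omega}|_{G^\perp})$: the restriction is genuinely symplectic because $G^\perp \subset \ker\bar\eta$ and $\xi\in G$ together remove the Reeb direction from $G^\perp$, while $K^\perp \cap G^\perp = K$ certifies that $K$ is self-orthogonal of the right dimension inside $G^\perp$. Proposition \ref{prop:Lagrangian:symplectomorphism} then supplies a symplectomorphism $\varphi:(G^\perp,\bar{\Omega}|_{G^\perp}) \to (K \oplus K^*, \omega_K)$ that restricts to the identity on $K$, and combining $\varphi$ with the identity on $G$ produces a bundle diffeomorphism $\psi: G \oplus G^\perp \to G \oplus K \oplus K^*$. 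Pulling back $\bar{\Omega}$ and $\bar{\eta}$ through $\psi^{-1}$ and decomposing any vector as $X = X_1 + X_2$ with $X_1 \in G$ and $X_2 \in K \oplus K^*$, I would verify that all cross terms vanish since one vector lies in $G$ and the other in $G^\perp$, forcing $\Omega_G = \pi^*\Omega + j_G^*\omega_K$ and $\eta_G = \pi^*\eta$.

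The main obstacle, and the one feature that distinguishes the cosymplectic case from the symplectic one, is showing that $\eta_G$ is intrinsic: this is precisely why the condition $\xi \in G$ had to be imposed at the very start, for otherwise the pulled-back 1-form would mix with the $K^*$ directions and fail to be independent of the embedding. Once intrinsicality of the cosymplectic vector bundle $(G \oplus K \oplus K^*, \Omega_G, \eta_G)$ is established, a second coisotropic embedding of $(M,\Omega,\eta)$ into some $(P',\Omega',\eta')$ produces the very same model bundle, yielding a cosymplectic vector bundle isomorphism $(T_MP,\bar{\Omega},\bar{\eta}) \cong (T_MP',\Omega',\eta')$ over $M$. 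Applying Theorem \ref{thm:cosympl:ext} to this isomorphism delivers a cosymplectomorphism between neighbourhoods of $M$ in $P$ and $P'$ that restricts to the identity on $M$, which is exactly the neighbourhood equivalence asserted by the theorem.
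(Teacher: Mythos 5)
Your proposal follows the paper's own argument essentially step for step: the same splitting $TM = K \oplus G$ with the Reeb field forced into $G$, the decomposition $T_MP = G \oplus G^\perp$, the use of Proposition \ref{prop:Lagrangian:symplectomorphism} to identify $(G^\perp,\bar{\Omega}|_{G^\perp})$ with $(K\oplus K^*,\omega_K)$, the verification that the resulting model bundle $(G\oplus K\oplus K^*,\Omega_G,\eta_G)$ with $\Omega_G=\pi^*\Omega+j_G^*\omega_K$ and $\eta_G=\pi^*\eta$ is intrinsic to $(M,\Omega,\eta)$ and the chosen splitting, and the final appeal to the extension theorem \ref{thm:cosympl:ext}. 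It is correct and takes the same route as the paper; the purely presymplectic statement as quoted is recovered by dropping $\eta$ and $\xi$ throughout, which reduces your argument to the Appendix proof verbatim.
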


\begin{proof}
    Let $(M, \omega)$ be coisotropically embedded in $(P, \Omega)$. We define $K = TM^\perp \subset TM$. Let $G$ be a complement of $K$ such that $TM = K \oplus G$. We have

    $K = TM^\perp = (K \oplus G)^\perp = K^\perp \cap G^\perp = TM \cap G^\perp$. So, $G \cap G^\perp = G^\perp \cap (TM \cap G) = K \cap G = \emptyset$. $(G \oplus G^\perp)^\perp = G^\perp \cap G = \emptyset$. Then, $G$ is symplectic and $T_MP = G \oplus G^\perp$.

    Since $K \subset G^\perp$ and, as $K^\perp \cap G^\perp = K$, then $K$ is a Lagrangian subbundle of $G^\perp$. Thus, by proposition \ref{prop:Lagrangian:symplectomorphism}, there exists a symplectomorphism  $\varphi: (G^\perp, \Omega \mid_{G^\perp}) \rightarrow (K \oplus K^*, \omega_K)$, such that $\varphi |_K = Id$. There is an induced diffeormorphism $\psi: G \oplus G^\perp \rightarrow G \oplus K \oplus K^*$ given by $\psi(g_1, g_2) = (g_1, \varphi(g_2))$, for $g_1 \in G, g_2 \in G^\perp$.

    %Para ver esto, observamos que $G^\perp, K, K^*$ son fibrados sobre $M$. Por lo tanto, las fibras tienen estructura de espacio vectorial. Tomamos una base $X^\mu$ de $K$ (a cada punto de $M$ le corresponde un vector. Este campo es $C^\infty$). Tomamos $Y^\mu$ un campo $C^\infty$ que cumpla que $i_{X^\mu}\Omega\mid_{G^\perp} (Y^\mu) = 1$. Entonces los $X^\mu, Y^\mu$ forman una base de $G^\perp$. Si existieran $a^\mu, b^\mu$ tal que $0 = a^\mu X^\mu + b^\mu Y^\mu$, entonces $0 = \Omega\mid_{G^\perp}(X^i, 0) = \Omega\mid_{G^\perp}(X^i, a^\mu X^\mu + b^\mu Y^\mu) = b^i$. Esto implica que $b^\mu = 0$ y como $X^\mu$ forman una base, tenemos que los vectores son linealmente independientes. Como 

    Let $\Omega_G$ be the pullback of $\Omega$ by $\psi^{-1}$. So, $\psi^{-1}(G \oplus K) = TM$ as $\psi^{-1} |_G$ and $\varphi^{-1} |_K$ are the identity and $K \oplus G = TM$. So, $\Omega_G |_{G \oplus K} = \Omega |_{TM} = \omega$, by definition of coisotropic embedding. Also, $\psi^{-1}(K \oplus K^*) = G^\perp$. As $\varphi$ is a symplectomorphism, we have that $\Omega_G |_{K \oplus K^*} = (\varphi^{-1})^*\Omega|_{G^\perp} = \omega_K$. Also, as $\Omega_G$ is defined as the pullback of a diffeomorphism, we have that in $G \oplus K \oplus K^*$, $G^\perp = K \oplus K^*$. We can decompose any element $X$ of $G \oplus K \oplus K$ as $X_1 + X_2$, where $X_1 \in G, X_2 \in K \oplus K^*$. Then, by linearity, 
    \begin{align*}
        \Omega_G(X, Y) &= \Omega_G(X_1 + X_2, Y_1 + Y_2) \\
        &= \Omega_G(X_1, Y_1) + \Omega_G(X_1, Y_2) + \Omega_G(X_2, Y_1) + \Omega_G(X_2, Y_2) \\
        &= \omega(X_1, Y_1) + \omega_K(X_2, Y_2),
    \end{align*}

    \noindent where $\Omega_G(X_1, Y_2) = 0$ and $\Omega_G(X_2, Y_1) = 0$ since one of the vectors is in $G$ and the other one in $G^\perp$. So, $\Omega_G = \pi^*\omega + j_G^*\omega_K$. Here, $\pi$ and $j_G$ are the same mappings as the ones defined in the existence theorem \ref{thm:sympl_cois:existence}. Note also that $j_G^*\omega_K = -\omega_{p_G}$.

    Consequently, the symplectic vector bundle $(G \oplus K \oplus K^*, \Omega_G)$ depends only upon $M, \omega$ and the decomposition $K \oplus G$ of $TM$ and, thus, it is independent of the embedding space $(P, \Omega)$. If $(P', \Omega')$ is another symplectic manifold in which $(M, \omega)$ is coisotropically embedded, then, the previous constructions provide a vector bundle symplectomorphisms $$(T_MP, \Omega) \sim (G \oplus K \oplus K^*, \Omega_G) \sim (T_MP', \Omega')$$ for some fixed (but irrelevant) splitting. We can now apply the extension theorem (\ref{thm:sympl:ext}) to see that $(P, \Omega)$ and $(P', \Omega')$ are neighbourhood equivalent.

\end{proof}

\end{document}